\begin{document}
\newtheorem{ach}{Achievability}
\newtheorem{con}{Converse}
\newtheorem{definition}{Definition}
\newtheorem{theorem}{Theorem}
\newtheorem{lemma}{Lemma}
\newtheorem{example}{Example}
\newtheorem{cor}{Corollary}
\newtheorem{prop}{Proposition}
\newtheorem{conjecture}{Conjecture}
\newtheorem{remark}{Remark}
\title{Degrees of Freedom of the Rank-deficient Interference Channel with Feedback}
\author{\IEEEauthorblockN{Sung Ho Chae,~\IEEEmembership{Student Member,~IEEE}, Changho Suh,~\IEEEmembership{Member,~IEEE}, and Sae-Young
Chung,~\IEEEmembership{Senior Member,~IEEE}}\thanks{The material
in this paper will be presented in part at the IEEE International
Symposium on Information Theory (ISIT)
 2013 and was in part submitted to the Allerton Conference on Communication, Control, and Computing 2013.}%
}
 \maketitle

\begin{abstract}
We investigate the total degrees of freedom (DoF) of the $K$-user
rank-deficient interference channel with feedback. For the two-user
case, we characterize the total DoF by developing an achievable
scheme and deriving a matching upper bound. For the three-user case,
we develop a new achievable scheme which employs interference alignment
to  efficiently utilize the dimension of the received signal space. In addition, we derive an
upper bound for the general $K$-user case and show the tightness of the bound when the number of
antennas at each node is sufficiently large. As a
consequence of these results, we show that feedback can increase the
DoF when the number of antennas at each node is large enough as
compared to the ranks of channel matrices. This finding is in
contrast to the full-rank interference channel where feedback
provides no DoF gain. The gain comes from using feedback to provide
alternative signal paths, thereby effectively increasing the ranks
of desired channel matrices.

\end{abstract}
\begin{IEEEkeywords}
Degrees of freedom, feedback, interference alignment, interference channel, rank-deficient channel.
\end{IEEEkeywords}
 \IEEEpeerreviewmaketitle

\section{Introduction}
It is well known that feedback cannot increase the capacity of
memoryless point-to-point channels~\cite{Shannon56}. Although the
capacity of multiple access channels can in fact increase when
feedback is present, the gain is bounded by one bit for the
Gaussian case~\cite{Ozarow84}. These results give a pessimistic
view on feedback capacity, although feedback can still be useful
for simplifying coding strategies as well as improving
reliability~\cite{LNIT}. Recent work~\cite{Suh11}, however, has
shown that in interference channels, feedback can provide more
significant gains. Specifically, it is shown that the capacity
gain due to feedback becomes arbitrarily large for certain channel
parameters (unbounded gain). The gain comes from the fact that
feedback can help efficient resource sharing between the
interfering users. In the process of deriving this conclusion,
\cite{Suh11} has characterized the feedback capacity region to
within 2 bits of the two-user Gaussian interference channel.

The results of~\cite{Suh11} indicate that feedback enables a
significant capacity improvement of multi-user networks with
interfering links. However, if we turn our attention to degrees of
freedom (DoF), feedback fails to provide promising results. From the
results of~\cite{Huang09,Vaze11}, it has been shown that feedback
cannot improve the total DoF for the two-user full-rank Gaussian
MIMO interference channel\footnote{However, recently it has been
shown in~\cite{Hwang12} that for \emph{multihop} networks, feedback
can increase DoF.}. Therefore, feedback can provide unbounded
capacity gain but cannot increase the DoF in the full-rank channel.

In this work, we show that feedback, however, can increase the total
DoF in the \emph{rank-deficient} interference channel. The
rank-deficient channel captures a poor scattering environment where
there are only few signal paths between nodes. For
example, for rooftop-to-rooftop communications in which transmit and
receive antennas are mounted high above the ground, the angular
spread becomes very low~\cite{Winner,802.16m,mit,rooftop}, and, as a
result, the channel matrix becomes rank deficient due to the lack of
multipath. In addition, for massive MIMO communications in which
each node has numerous antennas, channel matrices cannot have full
rank unless there are enough number of signal paths between nodes.
The non-feedback DoF of the rank-deficient interference channel has been
studied in~\cite{Chae11,Krishnamurthy12,Zeng12}, and the optimal
DoFs for the two-user and three-user cases have been established
in~\cite{Krishnamurthy12}. In this paper, we now investigate the effects of \emph{feedback} on the total DoF of the rank-deficient interference channel. 
For the two-user case, we adopt the same rank-deficient channel
model as in~\cite{Krishnamurthy12} in which the number of transmit
and receive antennas and the ranks of channel matrices are
arbitrary. We develop an achievable scheme and also derive a
matching upper bound, thus characterizing the total DoF. For the
three-user case, we focus on a symmetric case in which each node
has the same number of antennas, the rank of each direct link is
the same, and the rank of each cross link is the same. We
establish the achievable total DoF of this channel by developing a
new achievable scheme. The proposed scheme employs interference
alignment to efficiently utilize the dimension of the received signal space
when the rank of cross links is sufficiently large as compared to
the number of antennas at each node. Furthermore, we derive an
upper bound for the general $K$-user case, which is indeed achievable 
when the number of antennas at each node is sufficiently large. As
a consequence of these results, we show that feedback can increase
the DoF when channel matrices of desired links are highly
rank-deficient. The gain comes from the fact that feedback can
provide alternative signal paths in the rank-deficient channel,
and hence the ranks of desired channel matrices are effectively
increased, which is not possible in the full-rank channel. The
result of this paper also includes that of the full-rank channel
with feedback as a special case.

The rest of this paper is organized as follows. In Section II, we
describe the channel model considered in the paper. In Section III,
we show the main results of the paper and provide an intuition as to how
feedback can increase the DoF in the rank-deficient channel. In addition,
we provide the proofs of main theorems in Sections IV, V, and VI.
Finally, we conclude the paper in Section VII.

\textbf{Notations}: Throughout the paper, we will use $\mathbf{A}$
and $\mathbf{a}$ to denote a matrix and a vector, respectively.
Let $\mathbf{A}^{T}$ and $||\mathbf{A}||$ denote the transpose and
the norm of $\mathbf{A}$, respectively. In addition, let
$|\mathbf{A}|$ and $\text{rank}(\mathbf{A})$ denote the
determinant and the rank of $\mathbf{A}$, respectively. The
notations $\mathbf{I}_n$ and $\mathbf{0}_{n\times n}$ denote
the $n\times n$ identity matrix and zero matrix, respectively.
We write $f(x)=o(x)$ if $\lim_{x\rightarrow
\infty}\frac{f(x)}{x}=0$. For convenience, when indexing channel matrices, we use modular arithmetic where the modulus is the number of users. (e.g., for the three-user case, $\mathbf{H}_{1,4}$ means $\mathbf{H}_{1,1}$).

\section{System Model}
Consider the $K$-user rank-deficient interference channel with feedback, as
depicted in Fig. \ref{fig1}. Transmitter $i$ wishes to
communicate with receiver $i$, and
transmitter $i$ and receiver $i$ use $M_i$ and $N_i$ antennas, respectively. We assume that all channel
coefficients are fixed and known to all nodes. Then, the input and
output relationship at time slot $t$ is given by
\begin{equation*}
\mathbf{y}_{j}(t)=\sum^{K}_{i=1}\mathbf{H}_{j,i}\mathbf{x}_{i}(t)+\mathbf{z}_{j}(t),
\end{equation*}
where $\mathbf{x}_{i}(t)$ is the $M_i\times 1$ input signal vector
at transmitter $i$, $\mathbf{H}_{j,i}$ is the $N_j\times M_i$
channel matrix from transmitter $i$ to receiver $j$, and
$\mathbf{y}_j(t)$ is the $N_j\times 1$ received signal vector at
receiver $j$. The noise vector $\mathbf{z}_j(t)$ is the additive
white circularly symmetric complex Gaussian with zero mean and
covariance of $\mathbf{I}_{N_j}$. We assume that all of the noise
vectors and signal vectors are independent of each other.

In this paper, we adopt the rank-deficient channel model
in~\cite{Tse_wireless}, in which there are $D_{j,i}\leq
\min\{M_i,N_j\}$ independent signal paths from transmitter $i$ to
receiver $j$. Let $\mathbf{H}^{(k)}_{j,i}$ denote the channel matrix
corresponding to the $k$th signal path between transmitter $i$ and receiver $j$. Note that due to the
key-hole effect~\cite{Tse_wireless},
$\text{rank}(\mathbf{H}^{(k)}_{j,i})=1$, $\forall
k=1,2,\ldots,D_{j,i}$. Therefore, we assume that the matrix
$\mathbf{H}_{j,i}$ is given by
\begin{align}
\mathbf{H}_{j,i}&=\sum_{k=1}^{D_{j,i}}\mathbf{H}^{(k)}_{j,i}\nonumber \\
&=\sum_{k=1}^{D_{j,i}}\mathbf{a}^{(k)}_{j,i}{\mathbf{b}_{j,i}^{(k)}}^T, \quad
\forall i,j=1,2,\ldots,K
\end{align}
where $\mathbf{a}_{j,i}^{(k)}$ and $\mathbf{b}^{(k)}_{j,i}$ are $N_j\times
1$ and $M_i \times 1$ vectors respectively, and their coefficients
are drawn from a continuous distribution. From (1), we can see
that $\text{rank}(\mathbf{H}_{j,i})=D_{j,i}$ with probability one.

\begin{figure}[!t]
\centering
\includegraphics[scale=0.7]{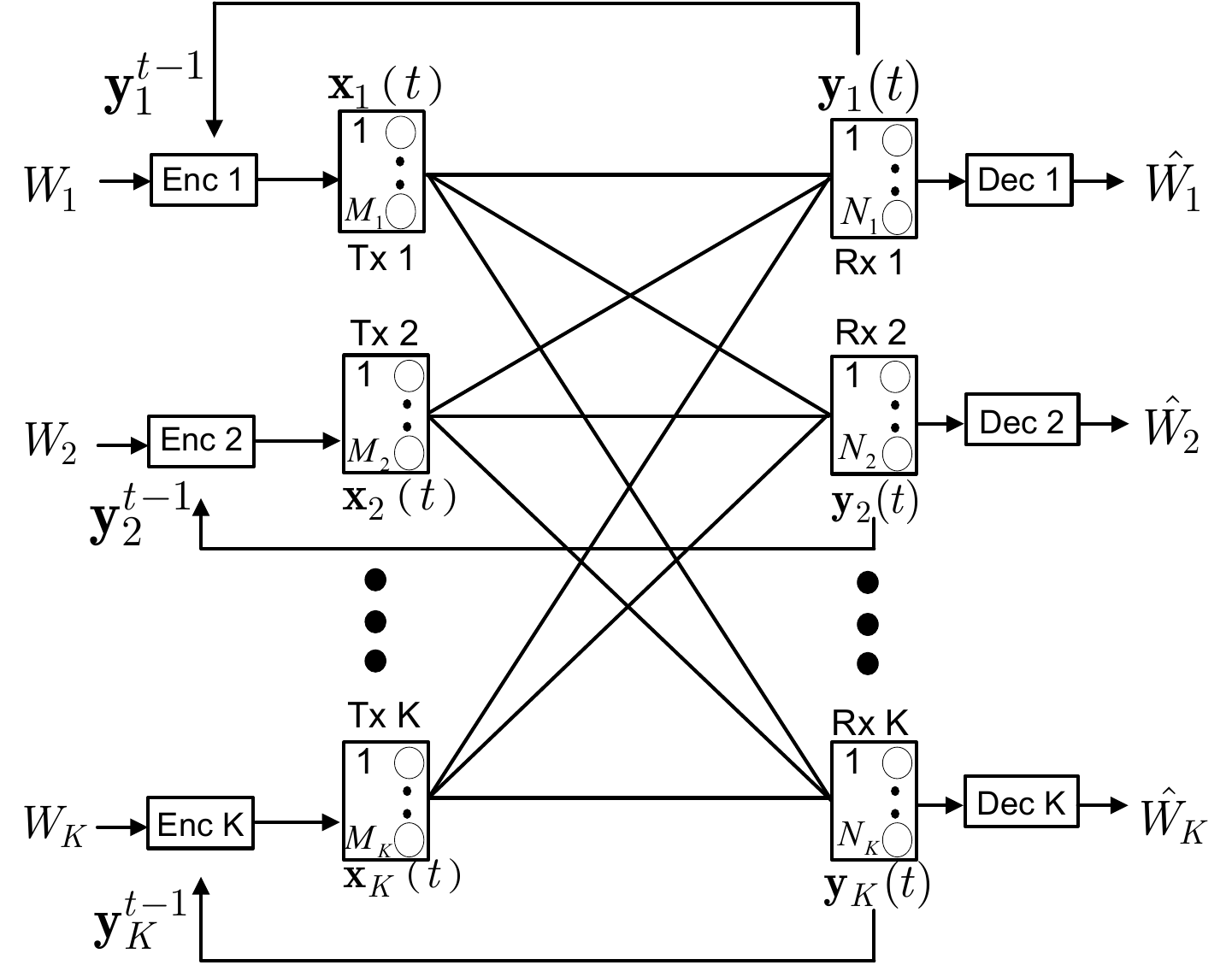}
\caption{The $K$-user rank-deficient interference channel with feedback.}
\label{fig1} \vspace{-0.1in}
\end{figure}

There are $K$ independent messages $W_1,W_2,\ldots, W_K$. At time slot
$t$, transmitter $i$ sends the encoded signal $\mathbf{x}_{i}(t)$,
which is a function of $W_i$ and past output sequences
$\mathbf{y}_i^{t-1}\triangleq [\begin{array}{ccccc}
 \mathbf{y}_i(1)& \mathbf{y}_i(2)& \cdots &\mathbf{y}_i(t-1) \\
\end{array}]^{T}$. We assume that each transmitter should satisfy the average power constraint $P$, i.e.,
$E[|\mathbf{x}_i(t)|^2]\leq P$ for $i\in\{1,2,\ldots, K\}$.  A rate tuple
$(R_1, R_2, \ldots, R_K)$ is said to be achievable if there exists a sequence
of $(2^{nR_1},2^{nR_2}, \ldots ,,2^{nR_K}, n)$ codes such that the average probability
of decoding error tends to zero as the code length $n$ goes to
infinity. The capacity region $\mathcal{C}$ of this channel is the
closure of the set of achievable rate tuples $(R_1, R_2, \ldots, R_K)$. The
total DoF is defined as $\Gamma=\lim_{P \to
\infty}\max_{(R_1,R_2,\ldots, R_K)\in \mathcal{C}}\frac{\sum_{i=1}^{K}R_i}{\log(P)}$.

\section{Main Results}
\subsection{Two-user case}
For the two-user case, we characterize the total DoF as stated in the following theorem by developing an achievable scheme and deriving a
matching upper bound.
\begin{theorem}[Two-user case]
For the two-user rank-deficient interference channel with
feedback, the total DoF is given by
\begin{align*}
\Gamma_{fb}=&\min\{M_1+N_2-D_{2,1},M_2+N_1-D_{1,2},\\
&D_{1,1}+D_{2,2}+D_{1,2}, D_{1,1}+D_{2,2}+D_{2,1},\\
&\min\{M_1,N_1\}+D_{2,2},\min\{M_2,N_2\}+D_{1,1}\}
\end{align*}
\begin{proof}
See Section IV for the proof.
\end{proof}
\end{theorem}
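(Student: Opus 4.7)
The plan is to split the proof into an achievability part and a matching converse. Because the stated DoF is the minimum of six expressions, the converse requires six separate upper bounds on $R_1+R_2$, and the achievable scheme must, depending on which bound is binding, realize that value. The unifying intuition I would keep central throughout is the one already highlighted in the introduction: feedback creates an \emph{alternative signal path}. A symbol sent by transmitter $i$ that reaches receiver $j$ through $\mathbf{H}_{j,i}$ can, via feedback, be relearned at transmitter $j$ and retransmitted on $\mathbf{H}_{i,j}$ toward the intended receiver $i$, effectively boosting the rank available for user $i$'s desired link. The six bounds quantify exactly how much of this detour is convertible into DoF.

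\textbf{Achievability plan.} My plan is to construct a block-Markov linear scheme with message splitting. Each message $W_i$ is split into a direct stream, carried on the $D_{i,i}$-dimensional image of $\mathbf{H}_{i,i}$, and a common stream, loaded onto transmit directions that reach receiver $j$ through $\mathbf{H}_{j,i}$. At the end of each block, receiver $j$ feeds back its observation to transmitter $j$; transmitter $j$, now knowing user $i$'s common symbols, forwards them in the next block through $\mathbf{H}_{i,j}$ so that they appear at receiver $i$ on dimensions complementary to those already occupied by its direct stream. Because transmitter $j$'s forwarded contribution at receiver $j$ coincides with a quantity receiver $j$ has previously observed, it can be subtracted there and does not consume fresh dimensions. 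By adjusting the split between direct and common symbols, and possibly time-sharing among corner configurations, I would match each of the six candidate bounds depending on which one is tight.

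\textbf{Converse plan.} I would sort the six bounds into three families. The pair $\min\{M_i,N_i\}+D_{j,j}$ follows from a genie argument: supplying $W_i$ to receiver $j$ strips away the cross interference, leaving a point-to-point link of rank $D_{j,j}$ whose DoF feedback cannot inflate, while $R_i\le\min\{M_i,N_i\}$ is the trivial single-user bound. The pair $M_i+N_j-D_{j,i}$ I would treat as Suh--Tse style feedback sum-rate bounds: expand $n(R_1+R_2)$ via Fano, condition on $W_i$, and decompose the received vector at receiver $j$ into its $D_{j,i}$-dimensional component that sees $\mathbf{x}_i$ and its $(N_j-D_{j,i})$-dimensional interference-free complement. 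The pair $D_{1,1}+D_{2,2}+D_{i,j}$ is rank-based: I would show that the joint subspace of received signals carrying information about $(W_1,W_2)$ must lie in the span of the columns of $\mathbf{H}_{1,1}$, $\mathbf{H}_{2,2}$, and the single chosen cross block $\mathbf{H}_{i,j}$.

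\textbf{Main obstacle.} I expect the third family of converse bounds to be the technical heart of the argument. Because feedback makes $\mathbf{x}_1$ and $\mathbf{x}_2$ statistically dependent, one cannot invoke a Gaussian MAC bound with independent inputs; instead I would write each $\mathbf{x}_i(t)$ as a deterministic function of $W_i$ and $\mathbf{y}_i^{t-1}$, then in the differential-entropy expansion of the stacked received vector peel off the component lying in the aforementioned rank-$(D_{1,1}+D_{2,2}+D_{i,j})$ subspace and argue that its orthogonal complement is noise plus a function of strictly past outputs, hence contributes no DoF under the feedback Markov chain. Parallel care is needed on the achievability side to ensure that the forwarded common symbols align at the unintended receiver with quantities already known from earlier blocks; this alignment check and the projection-plus-Markov argument in the converse are, in my view, where the real work of the proof lies.
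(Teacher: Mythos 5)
Your achievability plan is essentially the paper's: a two-phase linear scheme in which each transmitter zero-forces some symbols, lets others be overheard, and after feedback forwards the overheard symbols of the other user through its own cross link, with the rate region obtained by counting dimensions (the paper runs Fourier--Motzkin on the dimension constraints rather than time-sharing corner points, but that is cosmetic). One caveat: the paper needs \emph{three} symbol classes per transmitter --- symbols zero-forced at the unintended receiver, generic ``common'' symbols visible to both, and symbols placed in the nullspace of the \emph{direct} link $\mathbf{H}_{i,i}$ so that they reach only the cross receiver and are delivered to their true destination purely via the feedback detour. Your two-way split (direct stream on the image of $\mathbf{H}_{i,i}$ plus a common stream) does not obviously cover the corner points where $M_i=D_{i,i}$ leaves no nullspace for zero-forcing, so you should verify that adjusting only that split reaches all six faces of the region.

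The converse is where there is a genuine gap. You propose three separate families of arguments, and the third one --- deriving $D_{1,1}+D_{2,2}+D_{i,j}$ by arguing that ``the joint subspace of received signals carrying information about $(W_1,W_2)$ must lie in the span of the columns of $\mathbf{H}_{1,1}$, $\mathbf{H}_{2,2}$, and the single chosen cross block $\mathbf{H}_{i,j}$'' --- is not correct as stated: both cross links appear in the stacked received signals ($\mathbf{H}_{1,2}\mathbf{x}_2$ at receiver $1$ and $\mathbf{H}_{2,1}\mathbf{x}_1$ at receiver $2$), so no such three-block subspace containment holds, and the ``orthogonal complement is noise plus past outputs'' step has nothing concrete to project onto. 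The paper instead obtains \emph{all six} terms from a single Suh--Tse-type genie bound (plus its mirror image): give receiver $1$ the side information $\mathbf{s}_1^n$ with $\mathbf{s}_1=\mathbf{H}_{2,1}\mathbf{x}_1+\mathbf{z}_2$ together with $W_2$, arrive at $R_1+R_2\le h(\mathbf{y}_2)+h(\mathbf{H}_{1,1}\mathbf{x}_1+\mathbf{z}_1\,|\,\mathbf{s}_1)-h(\mathbf{z}_1)-h(\mathbf{z}_2)$, and then evaluate the two terms rank-consciously: $h(\mathbf{y}_2)-h(\mathbf{z}_2)\le\min\{N_2,D_{2,2}+D_{2,1}\}\log P+o(\log P)$ and $h(\mathbf{H}_{1,1}\mathbf{x}_1+\mathbf{z}_1\,|\,\mathbf{s}_1)-h(\mathbf{z}_1)\le\min\{M_1-D_{2,1},D_{1,1}\}\log P+o(\log P)$. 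Expanding the sum of these two minima yields $N_2+M_1-D_{2,1}$, $\min\{M_2,N_2\}+D_{1,1}$ (after combining with the mirrored bound), $M_1+D_{2,2}$, and $D_{1,1}+D_{2,2}+D_{2,1}$ simultaneously --- so your ``family 2'' bound, done with this rank-deficient covariance evaluation, is already the whole converse. This also repairs the subtlety in your family 1: under feedback, handing $W_i$ to receiver $j$ does not by itself allow interference cancellation, since $\mathbf{x}_i(t)$ depends on $\mathbf{y}_i^{t-1}$ as well; the conditioning on $\mathbf{s}_1$ and $\mathbf{x}_2$ in the genie bound is precisely what circumvents this.
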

\vspace{0.05in}
\begin{remark}[Full-rank case]
 For the case in which all the channel matrices have
full ranks, i.e., $D_{j,i}=\min(M_i,N_j)$ $\forall i,j=1,2$, the
total DoF becomes
\begin{align*}
\Gamma_{fb}=&\min\{M_1+M_2,N_1+N_2,\max\{M_1,N_2\},\max\{M_2,N_1\}\},
\end{align*}
which coincides with the result for the full-rank interference
channel in~\cite{Huang09,Vaze11}.
\end{remark}

\vspace{0.05in}
\vspace{0.05in}
\begin{remark}
If all the direct links have full ranks, i.e.,
$D_{1,1}=\min(M_1,N_1)$ and $D_{2,2}=\min(M_2,N_2)$, the result
recovers the non-feedback case in~\cite{Krishnamurthy12}:
\begin{align*}
\Gamma_{no}=&\min\{M_1+N_2-D_{2,1},N_1+M_2-D_{1,2},D_{1,1}+D_{2,2}\}.
\end{align*}
\end{remark}
\vspace{0.05in}
Notice that for the above two cases, feedback cannot increase the
total DoF.
\vspace{0.05in}

\textbf{DoF gain due to feedback}: Consider a symmetric case where
$M_1=M_2=N_1=N_2=M$ and $D_{1,1}=D_{1,2}=D_{2,1}=D_{2,2}=D$. Here, we assume that $D$ is even, but we can get a similar graph when $D$ is odd. We plot
the total DoF as a function of $M$ with fixed $D$ in Fig.
\ref{fig2}. Note that the DoF gain due to feedback can be achieved
when the ratio of the number of antennas at each node to the rank of
each channel matrix is greater than a certain threshold. For
$M>1.5D$, we can achieve a higher DoF. The gain comes from the fact
that feedback can provide alternative signal paths when the number
of antennas at each node is large enough as compared to the channel
ranks. Here, we provide an intuition behind this
gain through a simple example.

\begin{figure}[!t]
\centering
\includegraphics[scale=0.7]{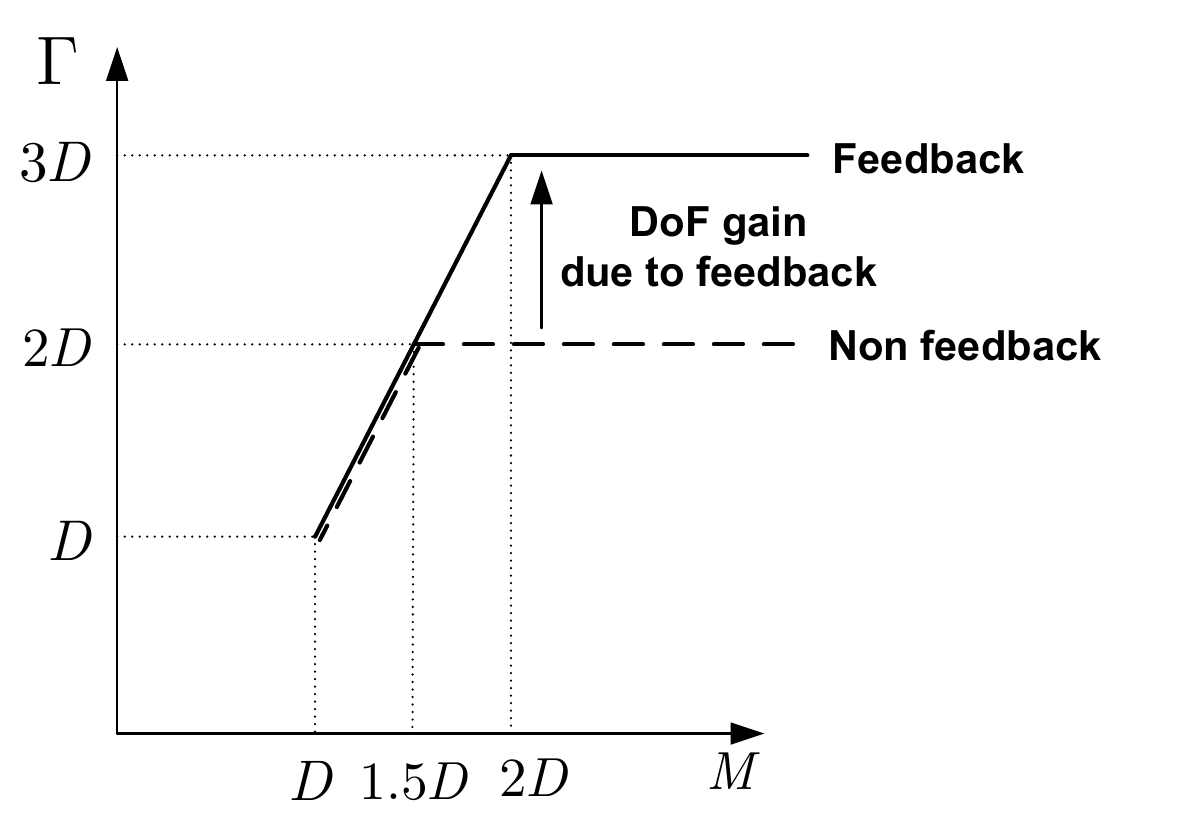}
\caption{Total DoF for two-user case when $M_1=M_2=N_1=N_2=M$ and
$D_{1,1}=D_{1,2}=D_{2,1}=D_{2,2}=D.$} \label{fig2} \vspace{-0.1in}
\end{figure}
\vspace{0.05in}

\begin{figure}[!t]
\centering
\includegraphics[scale=0.7]{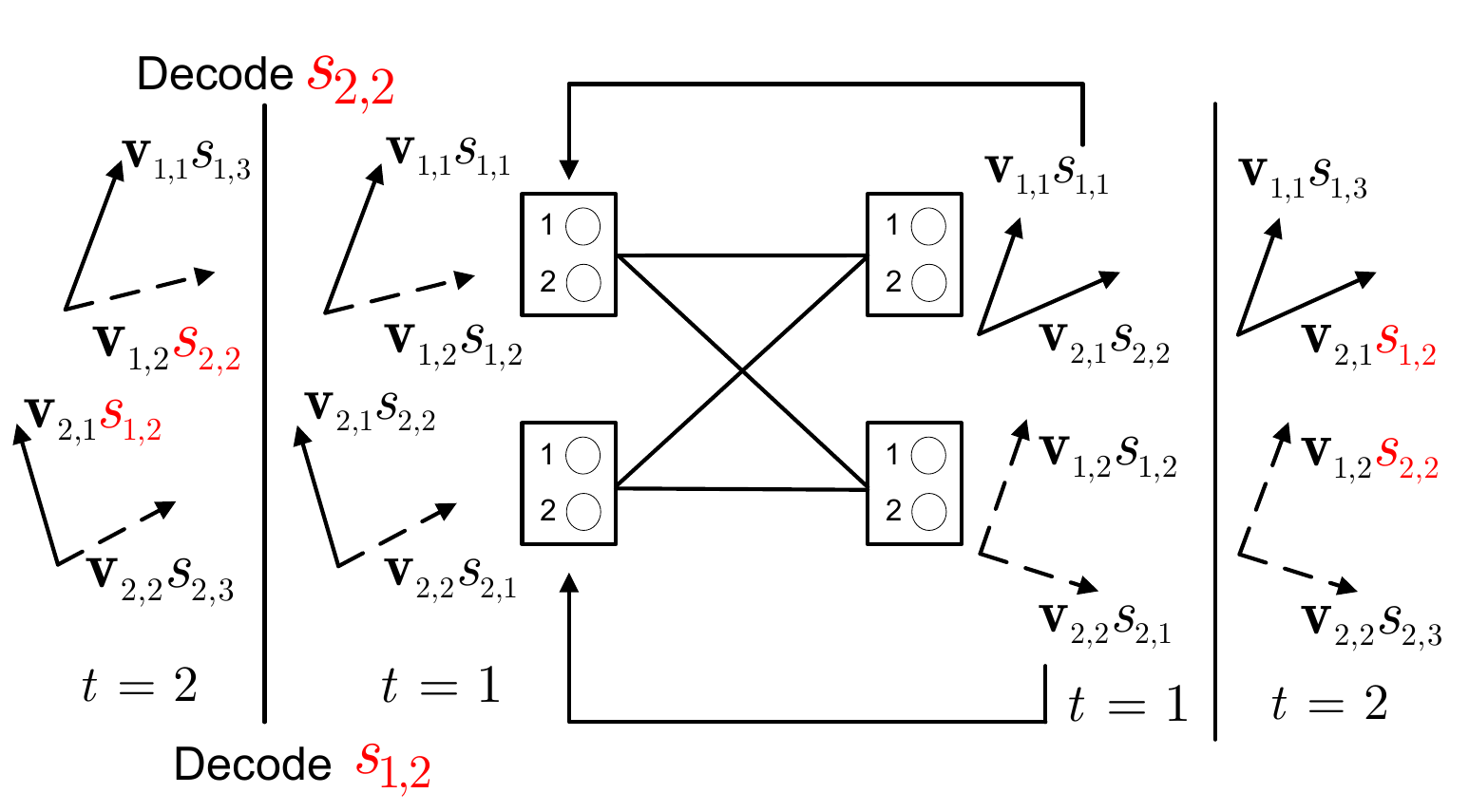}
\caption{Achievability in Example 1. The beamforming vectors
represented by solid and dashed lines denote the signals
transmitted to receivers $1$ and $2$, respectively.} \label{fig3}
\vspace{-0.1in}
\end{figure}
\vspace{0.05in}

 \vspace{0.05in}
\begin{example}
Consider the case where $M_1=M_2=N_1=N_2=2$ and
$D_{1,1}=D_{2,2}=D_{1,2}=D_{2,1}=1$. Our achievable scheme operates
in two time slots. See Fig. 3. Let $s_{i,j}$ denote the $j$th symbol
of user $i\in\{1,2\}$. At time slot 1, we design the transmitted
signals as
\begin{align*}
\mathbf{x}_1(1)&=\mathbf{v}_{1,1}s_{1,1}+\mathbf{v}_{1,2}s_{1,2}\\
\mathbf{x}_2(1)&=\mathbf{v}_{2,1}s_{2,2}+\mathbf{v}_{2,2}s_{2,1},
\end{align*}
where the beamforming vectors are designed such that
$\mathbf{H}_{2,1}\mathbf{v}_{1,1}=\mathbf{H}_{1,1}\mathbf{v}_{1,2}=\mathbf{H}_{2,2}\mathbf{v}_{2,1}=\mathbf{H}_{1,2}\mathbf{v}_{2,2}=0$.
Note that this design is feasible as the number of antennas at
each node is greater than the rank of each channel matrix. Then,
the received signal at each receiver is given by
\begin{align*}
\mathbf{y}_1(1)&=\mathbf{H}_{1,1}\mathbf{v}_{1,1}s_{1,1}+\mathbf{H}_{1,2}\mathbf{v}_{2,1}s_{2,2}+\mathbf{z}_1(1)\\
\mathbf{y}_2(1)&=\mathbf{H}_{2,2}\mathbf{v}_{2,2}s_{2,1}+\mathbf{H}_{2,1}\mathbf{v}_{1,2}s_{1,2}+\mathbf{z}_2(1).
\end{align*}
Here, $\text{rank}([\begin{array}{ccccc}
\mathbf{H}_{i,i}\mathbf{v}_{i,i}&  \mathbf{H}_{i,j}\mathbf{v}_{j,i} \\
\end{array}])=2$, $\forall i,j=1,2$ and $i\neq j$ with probability one since
channels are generic as $(1)$. Therefore, receiver 1 can decode
$s_{1,1}$ and transmitter 1 can know $s_{2,2}$ after receiving
$\mathbf{y}_1(1)$. Similarly, receiver $2$ and transmitter $2$ can
decode $s_{2,1}$ and $s_{1,2}$, respectively.

Now the idea is that at the next time slot, each transmitter sends
the other user's symbol in addition to its own fresh symbol. To
achieve this, we design the transmitted signals at time slot 2 as
\begin{align*}
\mathbf{x}_1(2)&=\mathbf{v}_{1,1}s_{1,3}+\mathbf{v}_{1,2}s_{2,2}\\
\mathbf{x}_2(2)&=\mathbf{v}_{2,2}s_{2,3}+\mathbf{v}_{2,1}s_{1,2},
\end{align*}
where $s_{1,3}$ and $s_{2,3}$ are new symbols for users $1$ and
$2$, respectively. Then we can see that receiver $1$ can decode
$(s_{1,2}, s_{1,3})$ and receiver $2$ can decode $(s_{2,2},
s_{2,3})$. As a result, six symbols can be transmitted over two
time slots, thus achieving $\Gamma_{fb}\geq 3$. This shows an
improvement over the non-feedback DoF of 2.
\end{example}

\begin{remark}
From the above example, we see that feedback can create new signal
paths (e.g., for $s_{1,2}$, transmitter $1\rightarrow$ receiver
$2\rightarrow$ feedback $\rightarrow$ transmitter $2\rightarrow$
receiver $1$), which do exist in the non-feedback case. When
the number of antennas at each node is large enough as compared to
the ranks of channel matrices, the dimension of signal space at
each node becomes sufficiently large such that some signals can be
transmitted through these new signal paths, thus increasing the
ranks of effective desired channel matrices. For instance, the
effective desired channel matrix for user $1$ at time slot $2$ is
given by $\mathbf{H}^{e}_{1,1}=\mathbf{H}_{1,1}+\mathbf{H}_{1,2}$,
where $\text{rank}(\mathbf{H}^{e}_{1,1})=2$. However, when all the
direct links have full ranks, feedback cannot increase the total
DoF since we cannot increase the ranks of direct links further and
cannot create such alternative signal paths. Note that the role of
feedback here is similar to that of relays in~\cite{Chae12}, which
shows that using multiple relays can create alternative signal
paths, thus increasing the total DoF in the rank-deficient
interference channel.
\end{remark}

\subsection{Three-user case}
When $K\geq 3$, we focus on a \emph{symmetric} case where
$M_i=N_i=M$, $D_{i,i}=D_d$, and $D_{j,i}=D_c$, $\forall
i=1,2,\ldots, K$ and $i\neq j$. Specifically, for $K=3$, we
develop a new achievable scheme which employs interference
alignment when the rank of cross links $D_c$ is sufficiently
large. The achievable total DoF for the three-user case is stated
in the following theorem.

\begin{theorem}[Lower bound for $K=3$]
For the symmetric three-user rank-deficient interference channel
with feedback, the following total DoF is achievable.
\begin{align*}
\Gamma_{fb}\geq \left\{\begin{array}{ll} \max\left\{\min\left\{\frac{3M}{2},M+D_d\right\},2M-D_c\right\}& \textrm{if $D_c\leq M\leq 2D_c$}, \\
 3M-3D_c& \textrm{if $2D_c\leq M\leq 2D_c+D_d$},\\
 3D_d+3D_c& \textrm{if $2D_c+D_d\leq M$}\\
\end{array} \right.
\end{align*}
\end{theorem}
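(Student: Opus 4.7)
The plan is to give an explicit two-phase feedback scheme for each of the three regimes of $M$, generalizing the construction of Example~1 from two users to three and invoking interference alignment in the regime where the cross-link rank $D_c$ is too large for simple null-space beamforming to suffice. In Phase~1, every transmitter sends fresh symbols along carefully chosen beamforming vectors; after feedback, each transmitter subtracts its own Phase-1 contribution from its received observation and recovers a subset of the other users' symbols. In Phase~2, each transmitter forwards those recovered symbols (together with new fresh symbols), so that the forwarded symbols reach the intended receiver via its \emph{own} direct link, i.e., through an alternative signal path that effectively raises the rank of the desired channel matrix at each receiver. A two-slot symbol accounting then yields the claimed achievable sum DoF in each case.

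For the large-$M$ regime $M\geq 2D_c+D_d$, each transmitter has enough antennas to simultaneously annihilate both of its outgoing cross links, so each user can send $D_d$ own-symbols plus $D_c$ symbols forwarded on behalf of each of the other two users per time slot; this is essentially a direct extension of Example~1 to three users and yields $3D_d+3D_c$. For the intermediate regime $2D_c\leq M\leq 2D_c+D_d$, null-space beamforming on both cross links remains feasible but the number of own-symbols per slot is now limited by $M-2D_c$ (the antenna budget rather than the direct-link rank becomes the binding constraint), so each user decodes $M-D_c$ symbols per slot, giving $3M-3D_c$.

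The small-$M$ regime $D_c\leq M\leq 2D_c$ is the most delicate, since cross-link null-space beamforming against both cross links simultaneously is no longer feasible. I would prove the two lower bounds inside the outer $\max$ separately. For $2M-D_c$: at each transmitter the null space of one outgoing cross link still has dimension $M-D_c>0$, so I would zero-force along that direction to kill one interferer and use feedback to forward the remaining interfering symbols through the direct link, so that the effective desired channel at each receiver has rank $2M-D_c$ over two slots. For $\min\{3M/2,\,M+D_d\}$: I would align the two interferers at each receiver into a common $D_c$-dimensional subspace, exploiting that the cross-link images already lie in a $2D_c$-dimensional subspace of the $M$-dimensional receive space, carve out up to $M-D_c$ useful dimensions per slot for desired streams, and saturate at $M+D_d$ when the direct-link rank becomes the binding constraint on the number of decodable streams per user.

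The main obstacle will be verifying that in every case the beamforming directions, the aligned-interference subspace, and the feedback-forwarded streams are generically in general position at each receiver, so that the effective desired channel has the claimed rank with probability one. This amounts to a dimension-counting argument built on the product form $\mathbf{H}_{j,i}=\sum_{k}\mathbf{a}^{(k)}_{j,i}{\mathbf{b}^{(k)}_{j,i}}^T$ from~(1), together with a feasibility check for the alignment equations in the small-$M$ regime; I expect the latter check to be the technical bottleneck, since it is where the interplay between $M$, $D_c$, and $D_d$ simultaneously constrains the choice of beamformers for fresh symbols and for the forwarded streams created by feedback.
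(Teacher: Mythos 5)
Your overall architecture --- two slots, zero-forcing beams, feedback used to relay overheard symbols, and alignment reserved for the regime $D_c\leq M\leq 2D_c$ --- is the paper's architecture, and your treatment of the regimes $M\geq 2D_c+D_d$ and $2D_c\leq M\leq 2D_c+D_d$ is essentially correct (though in the intermediate regime the count ``$M-D_c$ decodable symbols per user per slot'' also requires beams nulled at only \emph{one} cross receiver, in addition to the $M-2D_c$ doubly-nulled own-streams and the relayed streams; the paper sets $d^{[1]}=d^{[2]}=(2D_c+D_d-M)/2$ for exactly this reason). One conceptual slip worth fixing everywhere: the forwarded symbols do \emph{not} reach their intended receiver via a direct link. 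The relay path is cross link $\rightarrow$ feedback $\rightarrow$ cross link (e.g., $\mathrm{Tx}\ i\rightarrow \mathrm{Rx}\ i{+}2\rightarrow \mathrm{Tx}\ i{+}2\rightarrow \mathrm{Rx}\ i$), so the forwarding budget is governed by $D_c$, not by $D_d$; this is where the paper's constraints $d^{[1]}+d^{[4]}+d^{[5]}+d^{[6]}\leq D_c$ come from.

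The genuine gap is your route to the $2M-D_c$ branch when $D_c\leq M\leq 2D_c$. You propose zero-forcing one outgoing cross link per transmitter plus feedback relaying, with no alignment on that branch. A dimension count rules this out: with zero-forcing beams only, a user's own receiver can decode at most $D_d$ streams in the slot they are sent (limited by $\mathrm{rank}(\mathbf{H}_{i,i})=D_d$), and each transmitter can carry at most $M-D_c-D_d$ relayable streams per cross link that are invisible to its own receiver (the null space of $[\begin{array}{cc}\mathbf{H}_{i,i}& \mathbf{H}_{j,i}\end{array}]$). Since relayed streams cost two slots, the per-user DoF is capped at $D_d+(M-D_c-D_d)=M-D_c$, i.e., a total of $3M-3D_c$, which is strictly below $2M-D_c$ precisely when $M<2D_c$. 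The missing idea is a feedback-specific alignment: beams $\mathbf{V}_i^{[5,1]},\mathbf{V}_{i+1}^{[5,2]}$ chosen in the null spaces of the \emph{direct} links $\mathbf{H}_{i,i},\mathbf{H}_{i+1,i+1}$ and satisfying $\mathbf{H}_{i+2,i}\mathbf{V}_i^{[5,1]}=\mathbf{H}_{i+2,i+1}\mathbf{V}_{i+1}^{[5,2]}$, so that two users' to-be-relayed streams occupy a single subspace at the overhearing receiver. Feasibility of this system (a $3M\times 2M$ matrix of rank $M+2D_d$, hence $d^{[5]}\leq 2M-4D_d$) is the technical bottleneck you anticipated, and it is these $d^{[5]}=\frac{4D_c-2M}{3}$ extra relayed streams per user that lift $3M-3D_c$ to $2M-D_c$. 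The same vectors (with $d^{[5]}=\frac{2M-4D_d}{3}$) are also needed for the $M+D_d$ branch: the non-feedback alignment you describe yields only $\min\{3M/2,\,3D_d\}$ there, and the surplus $M-2D_d$ again comes from relayed, direct-link-nulled, pairwise-aligned streams. So both branches of the $\max$ in the small-$M$ regime require alignment of the \emph{feedback-relayed} streams, not just of the interference.
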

\begin{proof}
See Section V for the proof.
\end{proof}
\vspace{0.05in}

\begin{remark}
If all the direct links have full ranks, i.e., $D_{d}=M$, the
result again recovers the non-feedback case
in~\cite{Krishnamurthy12}:
\begin{align*}
\Gamma_{no}\geq \left\{\begin{array}{ll}\frac{3M}{2}& \textrm{if $D_c\leq M\leq 2D_c$}, \\
 3M-3D_c& \textrm{if $2D_c\leq M$}.\\
\end{array} \right.
\end{align*}
\end{remark}
\vspace{0.05in}

\begin{remark}  As will be explained in Section V, our achievable scheme involves interference alignment with feedback when $D_c\leq M< 2D_c$, while it is merely based on zero-forcing when $M\geq 2D_c$.
This is due to the fact that when the ratio of $D_c$ to $M$ is
greater than a certain threshold, we cannot null out all the interference signals, and thus aligning unwanted signals is required to utilize the dimension of the received signal space more efficiently. Furthermore,
as will be shown in Theorem 3, the proposed scheme achieves the
optimal total DoF when $M\geq 2D_c+D_d$.
\end{remark}
\vspace{0.05in}

\begin{figure}[!t]
\centering
\includegraphics[scale=0.8]{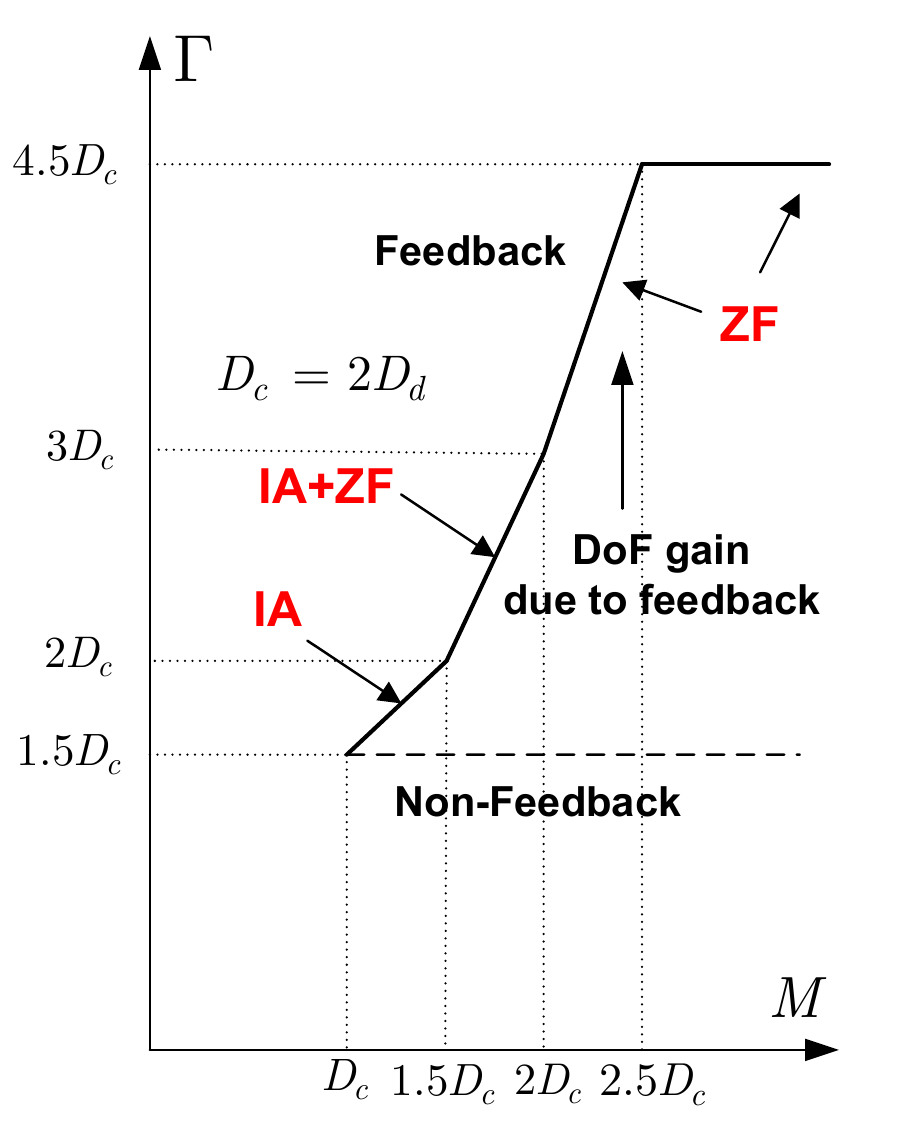}
\caption{Achievable total DoF for the three-user case when
$D_c=2D_d$. The achievable scheme is based on zero-forcing (ZF)
and/or interference alignment (IA), depending on the
number of antennas at each node.} \label{fig4} \vspace{-0.1in}
\end{figure}
\vspace{0.05in}

\textbf{DoF gain due to feedback}: Consider the case where
$D_c=2D_d$. Again, we plot the total DoF as a function of $M$ with
fixed $D_c$ and $D_d$ in Fig. \ref{fig4}. Note that for the
three-user case, we employ interference alignment when the
rank of each cross link $D_c$ is sufficiently large as compared to
the number of antennas at each node $M$ (Here, when $M\leq 2D_c$).
In addition, we can see that the slope in Fig.~\ref{fig4}
increases with the number of antennas. This is because if there
are enough antennas at each node, we can even create new
interference-free signal paths via zero-forcing rather than
aligning unwanted signals.

 \vspace{0.05in}
We provide a simple example that shows how interference alignment
can be applied with feedback. \vspace{0.05in}

\begin{figure}[!t]
\centering
\includegraphics[scale=0.68]{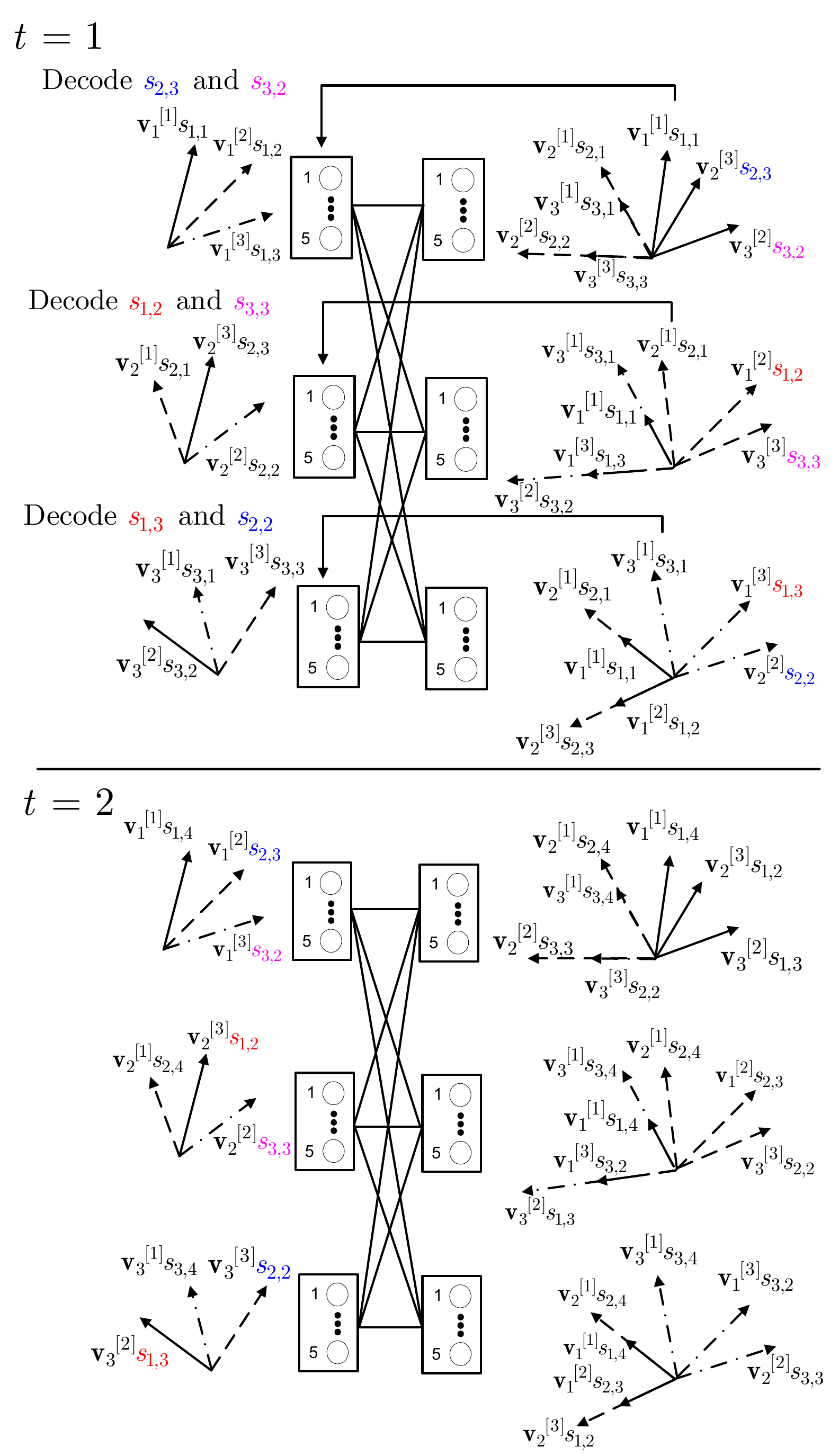}
\caption{Achievability in Example 2. The beamforming vectors
represented by solid, dashed, and dashed-dotted lines denote the
desired signals for receivers $1$, $2$, and $3$ respectively at each
time slot. Note that arrows in the figure represent linearly
independent directions in a five dimensional space.} \label{fig4_4}
\vspace{-0.1in}
\end{figure}

\begin{example}
Consider the case where $M=D_c=5$ and $D_d=1$. As in the two-user
case, the proposed scheme operates in two time slots. See
Fig.~\ref{fig4_4}.  At time slot 1, we design the transmitted
signal for transmitter $i\in\{1,2,3\}$ as
\begin{align*}
\mathbf{x}_i(1)=\mathbf{v}_i^{[1]}
s_{i,1}+\mathbf{v}_i^{[2]}s_{i,2}+\mathbf{v}_i^{[3]}s_{i,3}.
\end{align*}
Here, transmitter $i$ delivers $s_{i,1}$, $s_{i,2}$, and
$s_{i,3}$ to receivers $i$, $i+1$, and ${i+2}$, respectively, while aligning unwanted signals for each receiver. Note that although $s_{i,2}$ and $s_{i,3}$ are not intended symbols for receivers $i+1$ and $i+2$, using feedback, transmitters $i+1$ and $i+2$ will forward them to receiver $i$ in the next time slot.

To achieve
these, we construct $\mathbf{v}_i^{[1]}$ such that
\begin{align*}
\text{span}\left(\mathbf{H}_{i+1,i}\mathbf{v}^{[1]}_i\right)\subseteq
\text{span}\left(\mathbf{H}_{i+1,i+2}\mathbf{v}^{[1]}_{i+2}\right).
\end{align*}
We also design $\mathbf{v}_i^{[2]}$ and $\mathbf{v}_i^{[3]}$ such
that
\begin{align*}
&\mathbf{H}_{i,i}\mathbf{v}_{i}^{[3]}=\mathbf{H}_{i+2,i+2}\mathbf{v}_{i+2}^{[2]}=0,\\
&\mathbf{H}_{i+1,i}\mathbf{v}_{i}^{[3]}=\mathbf{H}_{i+1,i+2}\mathbf{v}_{i+2}^{[2]}.
\end{align*}
This beamforming design is feasible for $M\geq 2D_d$ and $M\leq
2D_c$ (This will be clarified in Section IV.). It turns out that
for receiver $i+1$, unwanted symbols ($s_{i,1}$, $s_{i+2,1}$) and
($s_{i,3}$, $s_{i+2,2}$) are aligned. Now we can see that
\begin{align*}
&\text{rank}\left(\left[\begin{array}{ccccccccc}
\mathbf{H}_{i,i+1}\mathbf{v}_{i+1}^{[1]} &
\mathbf{H}_{i,i+2}\mathbf{v}_{i+2}^{[1]}
\end{array}\right]\right)=1,\\
&\text{rank}\left(\left[\begin{array}{ccccccccc}
\mathbf{H}_{i,i+1}\mathbf{v}_{i+1}^{[2]} &
\mathbf{H}_{i,i+2}\mathbf{v}_{i+2}^{[3]}
\end{array}\right]\right)=1
\end{align*}
with probability one. Hence, receiver $i$ can decode $s_{i,1}$.

On the other hand, transmitter $i$ can know $s_{i+2,2}$ and
$s_{i+1,3}$ after receiving $\mathbf{y}_i(1)$. At the next time
slot, each transmitter forwards the other user's symbols in addition to its own fresh symbol. To achieve this, we design the transmitted signal
as
\begin{align*}
\mathbf{x}_i(2)=\mathbf{v}_i^{[1]}
s_{i,4}+\mathbf{v}_i^{[2]}s_{i+1,3}+\mathbf{v}_i^{[3]}s_{i+2,2},
\end{align*}
where $s_{i,4}$ is a new symbol for user $i$. Then, using the same
argument above, we can see that receiver $i$ can decode $(s_{i,2},
s_{i,3}, s_{i,4})$, $\forall i=1,2,3$. As a result, we can send 12
symbols over two time slots, thus achieving $\Gamma_{fb} \geq 6$.
Note that the total DoF becomes three when there is no feedback.
\end{example}

\subsection{Upper bound for the $K$-user case}

\begin{theorem} [Upper bound for the $K$-user case]

For the symmetric $K$-user rank-deficient interference channel with
feedback, the total DoF is upper bounded by
\begin{align*}
\Gamma_{fb} \leq  KD_d+\frac{D_cK(K-1)}{2}.
\end{align*}
\end{theorem}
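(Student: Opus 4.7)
The plan is to establish the upper bound via a pair-wise converse that is then aggregated over all unordered pairs of users. Concretely, the aim is to show that for every pair of distinct users $(i, j)$,
\[
R_i + R_j \;\leq\; \big(2 D_d + (K-1) D_c\big)\log P + o(\log P).
\]
Summing this pair-wise inequality over all $\binom{K}{2}$ unordered pairs, and using that each $R_\ell$ appears in exactly $K-1$ of these sums, yields
\[
(K-1)\sum_{\ell=1}^{K} R_\ell \;\leq\; \binom{K}{2}\big(2 D_d + (K-1) D_c\big)\log P + o(\log P),
\]
which, after dividing by $K-1$ and letting $P\to\infty$, gives the desired $\sum_\ell R_\ell \leq K D_d + \binom{K}{2} D_c$ in DoF. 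For $K=2$ the pair-wise inequality collapses to $R_1 + R_2 \leq 2 D_d + D_c$, matching the binding term $D_{1,1}+D_{2,2}+D_{1,2}$ of Theorem 1 in the symmetric case; this base consistency reassures us that the pair-wise bound is the correct primitive to work with.

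To derive the pair-wise inequality for $K\geq 3$, I would use a genie-aided cooperation argument. Endow the receivers of the pair $(i,j)$ with the messages and past output sequences of the remaining $K-2$ users. Since $\mathbf{x}_k(t) = f_k(W_k, \mathbf{y}_k^{t-1})$, this genie information lets receivers $i$ and $j$ reconstruct every external transmit signal exactly and subtract it from $\mathbf{y}_i$ and $\mathbf{y}_j$. What remains is an \emph{enhanced} two-user MIMO IC with feedback in which the pair can additionally route information through $K-2$ relay-style paths of the form $T_i \to R_k \to T_k \to R_j$ (each with rank-$D_c$ bottleneck at both hops), on top of the direct cross link. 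Counting effective signal-space dimensions then produces $2 D_d$ from the direct links $\mathbf{H}_{i,i}, \mathbf{H}_{j,j}$ and $(K-1) D_c$ from the aggregate of the direct cross link and the $K-2$ relay paths.

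The hard part is to argue rigorously that the $K-1$ cross paths contribute only $(K-1) D_c$ rather than $2(K-1) D_c$ to the pair's sum rate. This factor-of-two saving is the direct generalization of the saving already at work in Theorem 1: in the two-user case, the two cross links of rank $D_c$ each contribute a combined $D_c$ (and not $2 D_c$) to $R_1 + R_2$ because feedback forces each cross link to be used bidirectionally---a ``forward'' use that delivers a fresh symbol to be overheard and fed back, followed by a ``reverse'' use that forwards the fed-back symbol to its intended destination---so forward and reverse use share the same rank-$D_c$ ``pipe''. To carry this sharing argument over to the multi-path setting, I would extend the Suh-Tse style converse by introducing a chain of genie random variables designed so that each of the $K-1$ paths between the pair contributes a single rank-$D_c$ entropy term to the bound; showing that no additional DoF-carrying terms survive in the chain is the technically delicate step of the argument.
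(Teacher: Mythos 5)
There is a genuine gap: the entire content of your argument rests on the pair-wise inequality $R_i+R_j\leq (2D_d+(K-1)D_c)\log P+o(\log P)$, and you never prove it. Your own description concedes that the crucial point --- why the $K-1$ cross paths between the pair contribute $(K-1)D_c$ rather than $2(K-1)D_c$ --- is ``the technically delicate step,'' and you only gesture at ``a chain of genie random variables'' without specifying it. This is not a routine detail. Once you hand receivers $i$ and $j$ the messages and outputs of the other $K-2$ users, those users' transmitters are no longer mere interference to be subtracted: through feedback they can act as relays for the pair, and a naive dimension count of the pair's enhanced observation space gives $2(D_d+D_c)$ from $(\mathbf{y}_i,\mathbf{y}_j)$ plus up to $2D_c$ from each genie output $\mathbf{y}_k$, i.e.\ $2D_d+2(K-1)D_c$ --- exactly the bound that is too weak. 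The factor-of-two saving in the two-user Suh--Tse converse comes from a carefully chosen side-information variable $\mathbf{s}_1=\mathbf{H}_{2,1}\mathbf{x}_1+\mathbf{z}_2$ that conditions one entropy term against the other; it is not at all clear how to construct an analogous conditioning that simultaneously collapses $K-1$ distinct relay paths, and no such construction is exhibited. Until that lemma is proved, the aggregation step (which is arithmetically fine) has nothing to aggregate.

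For comparison, the paper's proof avoids pair-wise bounds entirely and is asymmetric rather than symmetric: it gives receiver $i$ the nested genie $\bar{W}_i=\{W_{i+1},\ldots,W_K\}$ and $\bar{Y}_i=\{\mathbf{y}^n_{i+1},\ldots,\mathbf{y}^n_K\}$, so that the cross terms telescope and each user $i$ is left with a single entropy term $h\bigl(\sum_{j=1}^{i}\mathbf{H}_{i,j}\mathbf{x}_j+\mathbf{z}_i\bigr)$ whose pre-log is at most $D_d+(i-1)D_c$. Summing these \emph{unequal} per-user bounds over $i$ gives $KD_d+\frac{K(K-1)}{2}D_c$ directly, with the ``saving'' arising automatically from the ordering of the genie rather than from any bidirectional-pipe argument. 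If you want to salvage your route, you would essentially have to re-derive this nested structure inside each pair, at which point the pair-wise decomposition buys you nothing.
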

\begin{proof}
See Section VI for the proof.
\end{proof}

\begin{cor}
For the symmetric $K$-user rank-deficient interference channel with feedback, the total DoF is given by
\begin{align*}
\Gamma_{fb} = KD_d+\frac{D_cK(K-1)}{2}
\end{align*}
when $M\geq D_d+(K-1)D_c$.
\end{cor}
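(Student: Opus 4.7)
The upper bound is already provided by Theorem~3, so what remains is to show $\Gamma_{fb} \geq K D_d + \tfrac{K(K-1)D_c}{2}$ whenever $M \geq D_d + (K-1)D_c$. The plan is to generalize the two-slot feedback construction of Example~1 to the $K$-user symmetric setting by viewing each of the $K(K-1)$ ordered cross-link routes as a relay path that is closed by the feedback link. Over two channel uses, every receiver should accumulate $2D_d$ symbols via its own direct link plus $(K-1)D_c$ symbols detoured through the other $K-1$ receivers, for a grand total of $K\bigl(2D_d + (K-1)D_c\bigr)$ desired symbols; dividing by $2$ channel uses and $\log P$ recovers the claimed DoF.

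Concretely, for every transmitter~$i$ and every $j\in\{1,\dots,K\}$ the first step is to precompute beamformers $\mathbf{v}_{i,j}^{(k)}$, $k=1,\dots,D_{j,i}$, lying inside the intersection $\bigcap_{\ell\neq j}\operatorname{null}(\mathbf{H}_{\ell,i})$, so that a symbol loaded onto $\mathbf{v}_{i,j}^{(k)}$ reaches receiver~$j$ alone. A dimension count for the stacked matrix $[\mathbf{H}_{\ell,i}]_{\ell\neq j}$ gives generic rank $(K-1)D_c$ when $j=i$ and $D_d+(K-2)D_c$ when $j\neq i$, so the null space has dimension at least $D_d$ (respectively $D_c$) precisely under the hypothesis $M\geq D_d+(K-1)D_c$. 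In slot~1, transmitter~$i$ then sends
\begin{equation*}
\mathbf{x}_i(1) \;=\; \sum_{k=1}^{D_d}\mathbf{v}_{i,i}^{(k)}\,s_{i,1}^{(k)} \;+\; \sum_{j\neq i}\sum_{k=1}^{D_c}\mathbf{v}_{i,j}^{(k)}\,t_{i,j}^{(k)},
\end{equation*}
where $s_{i,1}^{(k)}$ are fresh direct symbols for user~$i$ and the $t_{i,j}^{(k)}$ are additional symbols (also belonging to user~$i$) that will be temporarily parked at receiver~$j$. By the null-space construction, the only contributions surviving at receiver~$j$ are transmitter~$j$'s $D_d$ direct symbols together with, for each $i\neq j$, the $D_c$ relayed symbols $t_{i,j}^{(k)}$; this is a $(D_d+(K-1)D_c)$-dimensional system that receiver~$j$ inverts generically and then relays to transmitter~$j$ via feedback.

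In slot~2, transmitter~$i$ uses its feedback $\mathbf{y}_i(1)$ to extract the detoured symbols $\{t_{\ell,i}^{(k)}\}_{\ell\neq i}$ parked at receiver~$i$ (its own slot-1 signal is subtracted trivially) and forwards them back through the same beams, alongside fresh direct symbols:
\begin{equation*}
\mathbf{x}_i(2) \;=\; \sum_{k=1}^{D_d}\mathbf{v}_{i,i}^{(k)}\,s_{i,2}^{(k)} \;+\; \sum_{j\neq i}\sum_{k=1}^{D_c}\mathbf{v}_{i,j}^{(k)}\,t_{j,i}^{(k)}.
\end{equation*}
A repetition of the slot-1 decoding argument shows that receiver~$j$ can solve for $s_{j,2}^{(k)}$ together with the $(K-1)D_c$ symbols $\{t_{j,i}^{(k)}\}_{i\neq j}$ that were originally parked at receivers $i\neq j$, completing the delivery of $2D_d+(K-1)D_c$ intended symbols to user~$j$ in two channel uses.

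The step I expect to require the most care is the genericity bookkeeping rather than the algebra itself: one must verify that, for channel matrices drawn as rank-$D_c$ or rank-$D_d$ sums of outer products as in~(1), the null-space intersections used to construct the $\mathbf{v}_{i,j}^{(k)}$ have exactly the expected dimensions, and that at every receiver the images $\mathbf{H}_{j,j}\mathbf{v}_{j,j}^{(k)}$ together with $\{\mathbf{H}_{j,i}\mathbf{v}_{i,j}^{(k)}\}_{i\neq j}$ are jointly linearly independent. Since the beamformers are themselves algebraic functions of the $\mathbf{a}_{j,i}^{(k)}$ and $\mathbf{b}_{j,i}^{(k)}$, this is slightly more delicate than a blanket "generic channels" appeal, but each non-degeneracy condition reduces to the non-vanishing of a single nonzero polynomial in the channel coefficients, which holds almost surely by the continuous distribution assumption. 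Once this is in place, summing the decoded symbols over the $K$ users and dividing by two channel uses matches the upper bound of Theorem~3.
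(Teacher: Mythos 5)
Your proposal is correct and follows essentially the same route as the paper: a two-slot scheme in which each transmitter zero-forces $D_d$ symbols onto its direct link and parks $D_c$ symbols at each other receiver, with feedback used to relay the parked symbols home in the second slot, matched against the Theorem~3 upper bound. Your explicit null-space dimension counts ($M-(K-1)D_c\geq D_d$ and $M-D_d-(K-2)D_c\geq D_c$) and the remark on genericity of the beamformer-dependent rank conditions merely make precise what the paper's one-paragraph argument leaves implicit.
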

\begin{proof}
The converse follows from Theorem 3. For achievability, we
consider a simple extension of the scheme in Theorem 2. At the
first time slot, each transmitter sends total $D_d+(K-1)D_c$
symbols, in which $D_d$ symbols are sent through the direct link
and $D_c$ symbols are sent through each cross link. Then, after
receiving $\mathbf{y}_i(1)$, transmitter $i$ and receiver $i$ can
know $D_d$ desired symbols and $(K-1)D_c$ the other user's symbols.
This is possible due to the fact that $M\geq D_d+(K-1)D_c$. At the
second time slot, each transmitter sends its new $D_d$ symbols and
also forwards the other user's symbols to the corresponding receivers.
Consequently, we can see that each receiver can decode
$2D_d+(K-1)D_c$ symbols during two-time slots, thus achieving
$\Gamma_{fb} \geq KD_d+\frac{D_cK(K-1)}{2}$.
\end{proof}
\begin{remark}
Suppose there are sufficiently many antennas at each node (e.g.,
$M\gg D_d+(K-1)D_c$.). Then, from Corollary 1, we can see that the
DoF gain due to feedback increases with the number of users. Let
$\Gamma_{fb}$ and $\Gamma_{no}$ denote the total DoFs when there
is feedback and no feedback, respectively. In addition, consider
the case where $D_c=D_d=D$. Then, we have
\begin{align*}
\frac{\Gamma_{fb}}{\Gamma_{no}}=\frac{DK(K+1)/2}{DK}=\frac{K+1}{2}
\end{align*} and we can see that the DoF gain is proportional to the number of users.
\end{remark}

\section{Proof of Theorem 1}
\subsection{Achievability}
Our achievable scheme operates in two time slots. For brevity, we
first categorize beamforming vectors for transmitter $i\in\{1,2\}$
into three types:
\begin{align*}
\mathbf{V}_i=\left[\begin{array}{ccccccccccccccc}
\mathbf{V}_i^{[1]}& \mathbf{V}_i^{[2]}& \mathbf{V}_i^{[3]}
\end{array}\right],
\end{align*}
where $\mathbf{V}_i^{[j]}$ is a concatenation of type $j$
beamforming vectors of transmitter $i$, i.e.,
$$\mathbf{V}_i^{[j]}=\left[\begin{array}{ccccccccccccccc}
\mathbf{v}^{[j]}_{i,{\sum^j_{k=1}d_i^{[k-1]}+1}}&
\mathbf{v}^{[j]}_{i,{\sum^j_{k=1}d_i^{[k-1]}+2}}& \cdots&
\mathbf{v}^{[j]}_{i,{\sum^j_{k=1}d_i^{k}}}
\end{array}\right],\quad \forall j=1,2,3,$$ $d_i^{[j]}$ denotes the number of vectors
in $\mathbf{V}_i^{[j]}$, and $d_i^{[0]}=0$.

\begin{itemize}
\item $\mathbf{v}_{i,k}^{[1]}$ denotes the $k$th beamforming
vector for transmitter $i$ which spans the nullspace of $\mathbf{H}_{j,i}$,
i.e., $\mathbf{H}_{j,i}\mathbf{v}_{i,k}^{[1]}=0$, and
$\mathbf{H}_{i,i}\mathbf{v}_{i,k}^{[1]}\neq 0$, where $i\neq j$.
Note that since $\text{rank}(\mathbf{H}_{j,i})=D_{j,i}$, the
maximum number of beamforming vectors satisfying this condition is
$M_i-D_{j,i}$.

\item $\mathbf{v}_{i,k}^{[2]}$ denotes the $k$th beamforming
vector for transmitter $i$ whose coefficients are randomly
generated from a continuous distribution and
$0<||\mathbf{v}_{i,k}^{[2]}||\leq A$, where $A$ is a finite value.
Hence, $\mathbf{H}_{i,i}\mathbf{v}_{i,k}^{[2]}\neq 0$ and
$\mathbf{H}_{j,i}\mathbf{v}_{i,k}^{[2]}\neq 0$ with probability
one.

\item $\mathbf{v}_{i,k}^{[3]}$ denotes the $k$th beamforming
vector for transmitter $i$ which spans the nullspace of  $\mathbf{H}_{i,i}$,
i.e., $\mathbf{H}_{i,i}\mathbf{v}_{i,k}^{[3]}=0$, and
$\mathbf{H}_{j,i}\mathbf{v}_{i,k}^{[3]}\neq 0$. Note that since
$\text{rank}(\mathbf{H}_{i,i})=D_{i,i}$, the maximum number of
beamforming vectors satisfying this condition is $M_i-D_{i,i}$.
\end{itemize}



Now we explain the proposed scheme. Let $\mathbf{s}_i^{[j]}(t)$
denote the symbols of user $i$ conveyed by $\mathbf{V}_{i}^{[j]}$
at time slot $t$. At the first time slot, we design the
transmitted signal as
\begin{align*}
\mathbf{x}_{i}(1)&=\mathbf{V}_i\mathbf{s}_i(1)\\
&=\mathbf{V}_{i}^{[1]}\mathbf{s}^{[1]}_{i}(1)+\mathbf{V}_{i}^{[2]}\mathbf{s}^{[2]}_{i}(1)+\mathbf{V}_{i}^{[3]}\mathbf{s}^{[3]}_{i}(1),
\end{align*}
where
\begin{align*}
\mathbf{s}_i(1)=\left[\begin{array}{ccccc}
\mathbf{s}^{[1]}_{i}(1)\\ \mathbf{s}^{[2]}_{i}(1)\\ \mathbf{s}^{[3]}_{i}(1)  \\
\end{array}\right],
\end{align*}
\begin{align*}
\mathbf{s}_{i}^{[j]}(1)=\left[\begin{array}{ccccc}
s_{i,\sum_{k=1}^{j}d_i^{[k-1]}+1}&
s_{i,\sum_{k=1}^{j}d_i^{[k-1]}+2} \cdots &
s_{i,\sum_{k=1}^{j}d_i^{k}}
\end{array}\right]^{T},\quad \forall j=1,2,3.
\end{align*}
Here, transmitters send their symbols with independent Gaussian
signaling, i.e.,
\begin{align*}
\mathbf{s}_i(1)~\sim
\mathcal{CN}\left(\mathbf{0}_{d},\frac{P}{d}\mathbf{I}_{d}\right)
\end{align*} where $d=d^{[1]}_i+d^{[2]}_i+d^{[3]}_i$, and
$\mathbf{V}_{i}^{[1]}$, $\mathbf{V}_{i}^{[2]}$, and
$\mathbf{V}_{i}^{[3]}$ are properly scaled to satisfy the power
constraint $P$. Then the received signal at receiver $i\in\{1,2\}$
is given by
\begin{align*}
\mathbf{y}_i(1)&=\mathbf{H}_{i,i}\mathbf{x}_i(1)+\mathbf{H}_{i,j}\mathbf{x}_j(1)+\mathbf{z}_i(1)\nonumber\\
\nonumber
&=\mathbf{H}_{i,i}\mathbf{V}_{i}^{[1]}\mathbf{s}^{[1]}_{i}(1)+\mathbf{H}_{i,i}\mathbf{V}_{i}^{[2]}\mathbf{s}^{[2]}_{i}(1)\\
&+\mathbf{H}_{i,j}\mathbf{V}_{j}^{[2]}\mathbf{s}^{[2]}_{j}(1)+\mathbf{H}_{i,j}\mathbf{V}_{j}^{[3]}\mathbf{s}^{[3]}_{j}(1)+\mathbf{z}_i(1).
\end{align*}

In the proposed scheme, we want to enable receiver $i$ to decode
its desired symbols $\mathbf{s}^{[1]}_{i}(1)$ and
$\mathbf{s}^{[2]}_{i}(1)$. In addition, we also want to make
transmitter $i$ be able to know the other user's symbols
$\mathbf{s}^{[2]}_{j}(1)$ and $\mathbf{s}^{[3]}_{j}(1)$ after its
corresponding receiver feeds back the received signal. To achieve
these, we choose $ d_{1}^{[1]}, d_{1}^{[2]}, d_{1}^{[3]},
d_{2}^{[1]}, d_{2}^{[2]}$, and $d_{2}^{[3]}$ to satisfy the
following conditions.
\begin{align}
&d_1^{[3]}=d_2^{[3]}\triangleq d^{f}\\
&0\leq d_{1}^{[1]}\leq M_1-D_{2,1}\\
&0\leq d_{2}^{[1]}\leq M_2-D_{1,2}\\
&0\leq d_f\leq \min\{M_1-D_{1,1},M_2-D_{2,2}\}\\
&0\leq d_{1}^{[1]}+d_{1}^{[2]}\leq D_{1,1}\\
&0\leq d_{2}^{[1]}+d_{2}^{[2]}\leq D_{2,2}\\
&0\leq d_{1}^{[2]}+d^{f}\leq D_{2,1}\\
&0\leq d_{2}^{[2]}+d^{f}\leq D_{1,2}\\
&0\leq d_{1}^{[2]}+d^{f}+d_{2}^{[1]}+d_{2}^{[2]}\leq N_2\\
&0\leq d_{2}^{[2]}+d^{f}+d_{1}^{[1]}+d_{1}^{[2]}\leq N_1
\end{align}
Here, the conditions $(3)$-$(5)$ are due to the properties of
$\mathbf{V}_{i}^{[1]}$ and $\mathbf{V}_{i}^{[3]}$; $(6)$-$(9)$ are
due to the fact that the number of symbols transmitted through a
channel is constrained by the rank of the channel matrix;
$(10)$-$(11)$ are due to the fact that the number of received
symbols at a receiver should be less than or equal to the number
of antennas at the receiver. Note that if the above conditions are
satisfied, we have
\begin{align*}&\text{rank}\left(
\left[\begin{array}{ccccc} \mathbf{H}_{i,i}\mathbf{V}^{[1]}_{i}&
\mathbf{H}_{i,i}\mathbf{V}^{[2]}_{i}&
\mathbf{H}_{i,j}\mathbf{V}_{j}^{[2]} &
\mathbf{H}_{i,j}\mathbf{V}_{j}^{[3]}
\end{array}\right]\right)\\
&=d_{i}^{[1]}+d_{i}^{[2]}+d_{j}^{[2]}+d^{f}.
\end{align*}
with probability one $\forall i=1,2$ and $i\neq j$. This is due to
the facts that $\mathbf{V}_1$ and $\mathbf{V}_2$ are full-rank
matrices and channel matrices are generic so that
$\mathbf{H}_{i,i}$ and $\mathbf{H}_{i,j}$ are random linear
transformations. In addition, since $d_i^{[1]}+d_i^{[2]}\leq
D_{i,i}$ and $d_j^{[2]}+d^f\leq \min\{D_{1,2},D_{2,1}\}$, linear
independence of signals is also preserved. Thus, by observing
$\mathbf{y}_i(1)$, receiver $i$ and transmitter $i$ can obtain the
desired results. Consequently, at time slot 1, receivers $1$ and
$2$ can decode $d_1^{[1]}+d_1^{[2]}$ and $d_2^{[1]}+d_2^{[2]}$
symbols, respectively.

Now we consider the proposed scheme in the second time slot.
Recall that transmitter $i$ can know the other user's symbols
$\mathbf{s}^{[2]}_{j}(1)$ and $\mathbf{s}^{[3]}_{j}(1)$ after
receiving feedback signal $\mathbf{y}_i(1)$. Among these symbols,
transmitter $i$ sends only $\mathbf{s}^{[3]}_{j}(1)$ for receiver
$j$ at the next time slot since symbols of
$\mathbf{s}^{[2]}_{j}(1)$ were already decoded by receiver $j$ at
the first time slot.
Hence, at the second time slot, we design the transmitted signal as
\begin{align*}
\mathbf{x}_{i}(2)=\mathbf{V}_{i}^{[1]}\mathbf{s}^{[1]}_{i}(2)+\mathbf{V}_{i}^{[2]}\mathbf{s}^{[2]}_{i}(2)+\mathbf{V}_{i}^{[3]}\mathbf{s}^{[3]}_{j}(1),
\end{align*}
where
\begin{align*}
&\mathbf{s}_{i}^{[1]}(2)=\left[\begin{array}{ccccc}
s_{i,d^{[1]}_i+d^{[2]}_i+d_i^{[3]}+1} &  \cdots &  s_{i,2d^{[1]}_i+d^{[2]}_i+d_i^{[3]}}   \\
\end{array}\right]^{T},\\
&\mathbf{s}_{i}^{[2]}(2)=\left[\begin{array}{ccccc}
s_{i,2d^{[1]}_i+d^{[2]}_i+d_i^{[3]}+1} &  \cdots &
s_{i,2d^{[1]}_i+2d^{[2]}_i+d_i^{[3]}}
\end{array}\right]^{T}.
\end{align*}
Here, $\mathbf{s}^{[1]}_{i}(2)$ and $\mathbf{s}^{[2]}_{i}(2)$ are
new symbols of user $i$ transmitted at the second time slot. As a
result, the received signal at receiver $i\in\{1,2\}$ is given by
\begin{align*}
\mathbf{y}_i(2)&=\mathbf{H}_{i,i}\mathbf{x}_i(2)+\mathbf{H}_{i,j}\mathbf{x}_j(2)+\mathbf{z}_i(2)\nonumber\\
\nonumber
&=\mathbf{H}_{i,i}\mathbf{V}_{i}^{[1]}\mathbf{s}^{[1]}_{i}(2)+\mathbf{H}_{i,i}\mathbf{V}_{i}^{[2]}\mathbf{s}^{[2]}_{i}(2)\\
&+\mathbf{H}_{i,j}\mathbf{V}_{j}^{[2]}\mathbf{s}^{[2]}_{j}(2)+\mathbf{H}_{i,j}\mathbf{V}_{j}^{[3]}\mathbf{s}^{[3]}_{i}(1)+\mathbf{z}_i(2).
\end{align*}
Then, using the same argument as above,receiver $i$ can decode all
the symbols $\mathbf{s}^{[1]}_{i}(2)$, $\mathbf{s}^{[2]}_{i}(2)$,
and $\mathbf{s}^{[3]}_{i}(1)$.

In summary, during two time slots, receivers $1$ and $2$ can
decode $2d_1^{[1]}+2d_1^{[2]}+d_1^{[3]}$ and
$2d_2^{[1]}+2d_2^{[2]}+d_2^{[3]}$ symbols, respectively.
Therefore, the achievable total DoF is
\begin{align*}
\Gamma_{fb}\geq &d_1^{[1]}+d_1^{[2]}+d_2^{[1]}+d_2^{[2]}+\frac{d_1^{[3]}+d_2^{[3]}}{2}\\
=&d_1^{[1]}+d_1^{[2]}+d_2^{[1]}+d_2^{[2]}+d_f.
\end{align*}

Finally, by evaluating the conditions $(2)$-$(11)$ using the
Fourier-Motzkin elimination, we get the desired bound:
\begin{align*}
\Gamma_{fb}\geq
&\min\{M_1+N_2-D_{2,1},M_2+N_1-D_{1,2},\\&D_{1,1}+D_{2,2}+D_{1,2},
D_{1,1}+D_{2,2}+D_{2,1},\\
&\min\{M_1,N_1\}+D_{2,2},\min\{M_2,N_2\}+D_{1,1}\}.
\end{align*}

\vspace{0.05in}
\begin{remark}
The achievable total DoF can also be established in an alternative
way. One implicit strategy is to employ Lemma $1$ in~\cite{Suh11}.
We can achieve the same DoF by setting $X_i=U_{if}+U_i+X_{ip}$
where $U_{if}=\mathbf{V}_{i}^{[3]}\mathbf{s}^{[3]}_{i}(1)$,
$U=(U_{1f},U_{2f})$,
$U_i=\mathbf{V}_{i}^{[2]}\mathbf{s}^{[2]}_{i}(1)$, and
$X_{ip}=\mathbf{V}_{i}^{[1]}\mathbf{s}^{[1]}_{i}(1)$, $\forall
i=1,2$.
\end{remark}
 \vspace{0.05in}
\subsection{Converse}
The proof is a direct extension of that in the two-user SISO
interference channel with feedback~\cite{Suh11}. Hence, we focus
on explaining the steps needed for the rank-deficient channel.

Starting with Fano's inequality, we get:
\begin{align*}
n(R_1+R_2-\epsilon_n)&\leq I(W_1;\mathbf{y}_1^n)+I(W_2;\mathbf{y}_2^n)\\
&\leq
I(W_1;\mathbf{y}_1^n,\mathbf{s}_1^n,W_2)+I(W_2;\mathbf{y}_2^n)
\end{align*}
where
$\mathbf{s}_1=\mathbf{H}_{2,1}\mathbf{x}_{1}+\mathbf{z}_2$ as
in~\cite{Suh11}. Hence, by following the same steps
in~\cite{Suh11}, we have
\begin{align}
R_1+R_2&\leq
h(\mathbf{y}_2)+h(\mathbf{y}_1|\mathbf{s}_1,\mathbf{x_2})-h(\mathbf{z}_1)-h(\mathbf{z}_2).
\end{align}
Now we evaluate the inequality $(12)$ with respect to the number
of antennas at each node and the rank of each channel matrix. From
$(12)$, we have
\begin{align*}
R_1+R_2&\leq
h(\mathbf{y}_2)+h(\mathbf{y}_1|\mathbf{s}_1,\mathbf{x_2})-h(\mathbf{z}_1)-h(\mathbf{z}_2)\\
&\leq
h(\mathbf{y}_2)+h(\mathbf{H}_{1,1}\mathbf{x}_1+\mathbf{z}_1|\mathbf{s}_1)-h(\mathbf{z}_1)-h(\mathbf{z}_2).
\end{align*}
Notice that
\begin{align*}
&h(\mathbf{y}_2)-h(\mathbf{z}_2)\leq \log \left|K^{G}_{\mathbf{y}_2}\right|\\
&h(\mathbf{H}_{1,1}\mathbf{x}_1+\mathbf{z}_1|\mathbf{s}_1)-h(\mathbf{z}_1)\leq
\log\frac{\left|K^{G}_{(\mathbf{H}_{1,1}\mathbf{x}_1+\mathbf{z}_1,\mathbf{s}_1)}\right|}{\left|K^{G}_{\mathbf{s}_1}\right|},
\end{align*}
where $K^G_\mathbf{x}$ denotes the covariance matrix of a Gaussian
random vector $\mathbf{x}$~\cite{LNIT,Papailiopoulos12}.
Straightforward computation gives
\begin{align*}
&\log\left|K^{G}_{\mathbf{y}_2}\right|\leq \min\{N_2,
D_{2,2}+D_{2,1}\}\log P+o(\log P)\\
&\log\frac{\left|K^{G}_{(\mathbf{H}_{1,1}\mathbf{x}_1+\mathbf{z}_1,\mathbf{s}_1)}\right|}{\left|K^{G}_{\mathbf{s}_1}\right|}\leq
\min\{M_1-D_{2,1}, D_{1,1}\}\log P+o(\log P).
\end{align*}

Therefore, we have
\begin{align}
\Gamma_{fb}&\leq \min\{N_2, D_{2,2}+D_{2,1}\}+\min\{M_1-D_{2,1},
D_{1,1}\}\nonumber \\ \nonumber
&=\min\{N_2+M_1-D_{2,1}, N_2+D_{1,1},\\
 &~~~~~~~~~~M_1+D_{2,2}, D_{2,2}+D_{2,1}+D_{1,1}\}.
\end{align}

By symmetry, we can also get the following upper bound:
\begin{align}
\Gamma_{fb}&\leq \min\{N_1+M_2-D_{1,2}, N_1+D_{2,2},\nonumber \\
 &~~~~~~~~~~M_2+D_{1,1}, D_{1,1}+D_{1,2}+D_{2,2}\}.
\end{align}

Combining (13) and (14), we get the desired bound:
\begin{align*}
\Gamma_{fb}\leq
&\min\{M_1+N_2-D_{2,1},M_2+N_1-D_{1,2},\\&D_{1,1}+D_{2,2}+D_{1,2},
D_{1,1}+D_{2,2}+D_{2,1},\\
&\min\{M_1,N_1\}+D_{2,2},\min\{M_2,N_2\}+D_{1,1}\}.
\end{align*}

\section{Proof of Theorem 2}
As in the two-user case, our achievable scheme operates in two
time slots. The achievable scheme employs interference alignment
when $D_c$ is sufficiently large. For this section, we categorize
beamforming vectors for transmitter $i\in\{1,2,3\}$ into seven
types:
\begin{align*}
\mathbf{V}_i=\left[\begin{array}{ccccccccccccccc}
\mathbf{V}_i^{[1]}& \mathbf{V}_i^{[2]}&\cdots &\mathbf{V}_i^{[7]}
\end{array}\right].
\end{align*}
Here, since we consider the symmetric channel, we set
$d_i^{[j]}=d^{[j]}$, $\forall i=1,2,3$.

\begin{itemize}
\item $\mathbf{v}_{i,k}^{[1]}$ denotes the $k$th beamforming
vector for transmitter $i$ which spans the nullspace of  $\mathbf{H}_{i+1,i}$,
i.e., $\mathbf{H}_{i+1,i}\mathbf{v}_{i,k}^{[1]}=0$, and
$\mathbf{H}_{i,i}\mathbf{v}_{i,k}^{[1]}\neq 0$ and
$\mathbf{H}_{i+2,i}\mathbf{v}_{i,k}^{[1]}\neq 0$. Note that since
$\text{rank}(\mathbf{H}_{i+1,i})=D_{c}$, the maximum number of
beamforming vectors satisfying this condition is $M-D_{c}$.

\item $\mathbf{v}_{i,k}^{[2]}$ denotes the $k$th beamforming
vector for transmitter $i$ which spans the nullspace of  $\mathbf{H}_{i+2,i}$,
i.e., $\mathbf{H}_{i+2,i}\mathbf{v}_{i,k}^{[2]}=0$, and
$\mathbf{H}_{i,i}\mathbf{v}_{i,k}^{[2]}\neq 0$ and
$\mathbf{H}_{i+1,i}\mathbf{v}_{i,k}^{[2]}\neq 0$.

\item $\mathbf{v}_{i,k}^{[3]}$ denotes the $k$th beamforming
vector for transmitter $i$ which spans the nullspace of  $
[\begin{array}{cc}\mathbf{H}_{i+1,i}&
\mathbf{H}_{i+2,i}\end{array}]$, i.e.,
$\mathbf{H}_{i+1,i}\mathbf{v}_{i,k}^{[3]}=0$ and
$\mathbf{H}_{i+2,i}\mathbf{v}_{i,k}^{[3]}=0$, and
$\mathbf{H}_{i,i}\mathbf{v}_{i,k}^{[3]}\neq 0$.
 Note that this
type of beamforming vector exits only when $M\geq 2D_c$. Assuming
$M\geq 2D_c$, the maximum number of beamforming vectors satisfying
this condition is $M-2D_{c}$.

\item After determining $\mathbf{V}^{[1]}_i$, $\forall i=1,2,3$,
we construct alignment beamforming vectors for each transmitter.
We design $\mathbf{V}^{[4]}_i$ to satisfy
\begin{align*}
\text{span}\left(\mathbf{H}_{i+1,i}\mathbf{V}^{[4]}_i\right)\subseteq
\text{span}\left(\mathbf{H}_{i+1,i+2}\mathbf{V}^I_{i+2}\right)
\end{align*}
where
\begin{align*}
\mathbf{V}^I_{i}=\left[\begin{array}{ccccccccccccccc}
\mathbf{V}_i^{[1]} & \mathbf{V}_i^{[4]}
\end{array}
\right].
\end{align*}
To construct feasible $\mathbf{V}_{i}^{[4]}$, we employ the
beamforming scheme in~\cite{Krishnamurthy12}, which is proposed
for the non-feedback channel (Set $\mathbf{V}_{i}^{[4]}=V_i^{A}$
in~\cite{Krishnamurthy12}).

\item Consider the case where $M\geq 2D_d$ and $M\leq 2D_c$. Let
$\mathbf{V}_{i}^{[5]}=\left[\begin{array}{cccccc}\mathbf{V}_{i}^{[5,1]}&\mathbf{V}_{i}^{[5,2]}\end{array}\right]$,
where
\begin{align*}
&\mathbf{V}^{[5,1]}_i=\left[\begin{array}{ccccccccccccccc}
\mathbf{v}^{[5]}_{i,\sum_{j=1}^{4}d^{[j]}+1}&\cdots&\mathbf{v}^{[5]}_{i,\sum_{j=1}^{4}d^{[j]}+d^{[5]}/{2}}
\end{array}\right],\\
&\mathbf{V}^{[5,2]}_i=\left[\begin{array}{ccccccccccccccc}
\mathbf{v}^{[5]}_{i,\sum_{j=1}^{4}d^{[j]}+d^{[5]}/{2}}&\cdots&\mathbf{v}^{[5]}_{i,\sum_{j=1}^{5}d^{[j]}}
\end{array}\right].
\end{align*}

We construct alignment beamforming vectors
$\mathbf{V}_{i}^{[5,1]}$ and $\mathbf{V}_{i}^{[5,2]}$ such that
\begin{align*}
&\mathbf{H}_{i,i}\mathbf{V}_{i}^{[5,1]}=\mathbf{H}_{i,i}\mathbf{V}_{i}^{[5,2]}=0\\
&\mathbf{H}_{i+2,i}\mathbf{V}_{i}^{[5,1]}=\mathbf{H}_{i+2,i+1}\mathbf{V}_{i+1}^{[5,2]},
\end{align*}
or equivalently,
\begin{align*}
\underbrace{\left[\begin{array}{cccccc}\mathbf{H}_{i+2,i}&-\mathbf{H}_{i+2,i+1}\\
\mathbf{H}_{i,i}& \mathbf{0}_{M\times M}\\
\mathbf{0}_{M\times M}&
\mathbf{H}_{i+1,i+1}\end{array}\right]}_{\mathbf{T}}\left[\begin{array}{cccccc}\mathbf{V}_{i}^{[5,1]}\\
\mathbf{V}_{i+1}^{[5,2]}
\end{array}\right]=0.
\end{align*}

Since $\mathbf{T}$ is the $3M\times 2M$ matrix whose rank is
$M+2D_d$, we can find feasible $\mathbf{V}_{i}^{[5,1]}$ and
$\mathbf{V}_{i+1}^{[5,2]}$, where $d^{[5]}\leq 2M-4D_d$. For the
case where $M\leq 2D_d$ or $M\geq 2D_c$, we set $d^{[5]}=0$.

\item $\mathbf{v}_{i,k}^{[6]}$ denotes the $k$th beamforming
vector for transmitter $i$ which spans the nullspace of  $
[\begin{array}{cc}\mathbf{H}_{i,i}&
\mathbf{H}_{i+1,i}\end{array}]$, i.e,
$\mathbf{H}_{i,i}\mathbf{v}_{i,k}^{[6]}=0$ and
$\mathbf{H}_{i+1,i}\mathbf{v}_{i,k}^{[6]}=0$, and
$\mathbf{H}_{i+2,i}\mathbf{v}_{i,k}^{[6]}\neq 0$. Note that this
type of beamforming vector exits only when $M\geq D_d+D_c$.
Assuming $M\geq D_d+D_c$, the maximum number of beamforming
vectors satisfying this condition is $M-D_d-D_c$.

\item $\mathbf{v}_{i,k}^{[7]}$ denotes the $k$th beamforming
vector for transmitter $i$ which spans the nullspace of $[\begin{array}{cc}\mathbf{H}_{i,i}&
\mathbf{H}_{i+2,i}\end{array}]$, i.e,
$\mathbf{H}_{i,i}\mathbf{v}_{i,k}^{[7]}=0$ and
$\mathbf{H}_{i+2,i}\mathbf{v}_{i,k}^{[7]}=0$, and
$\mathbf{H}_{i+1,i}\mathbf{v}_{i,k}^{[7]}\neq 0$.
\end{itemize}
Notice that $\mathbf{V}_i^{[4]}$ and $\mathbf{V}_i^{[5]}$ are
alignment beamforming matrices while the others are zero-forcing
beamforming matrices.

Now we explain the proposed scheme. At time slot $t$, we design the
transmitted signal for transmitter $i\in{\{1,2,3}\}$ as
\begin{align*}
\mathbf{x}_{i}(t)&=\sum_{j=1}^{7}\mathbf{V}_i^{[j]}\mathbf{s}_i^{[j]}(t).
\end{align*}
Here, as in the two-user case, transmitters send their symbols with
independent Gaussian signaling, and beamforming vectors are properly
scaled to satisfy the power constraint. Then, due to the properties
of $\mathbf{V}_i^{[j]}$, $j\in\{1,2,\ldots, 7\}$, the received
signal of receiver $i$ at time slot 1 is given by
\begin{align}
\mathbf{y}_i(1)&=\mathbf{H}_{i,i}\mathbf{x}_i(1)+\mathbf{H}_{i,i+1}\mathbf{x}_{i+1}(1)+\mathbf{H}_{i,i+2}\mathbf{x}_{i+2}(1)+\mathbf{z}_i(1)\nonumber\\
&=\mathbf{H}_{i,i}\left(\mathbf{V}_{i}^{[1]}\mathbf{s}^{[1]}_{i}(1)+\mathbf{V}_{i}^{[2]}\mathbf{s}^{[2]}_{i}(1)+\mathbf{V}_{i}^{[3]}\mathbf{s}^{[3]}_{i}(1)+\mathbf{V}_{i}^{[4]}\mathbf{s}^{[4]}_{i}(1)\right)\nonumber \\
&+\mathbf{H}_{i,i+1}\left(\mathbf{V}_{i+1}^{[1]}\mathbf{s}^{[1]}_{i+1}(1)+\mathbf{V}_{i+1}^{[4]}\mathbf{s}^{[4]}_{i+1}(1)+\mathbf{V}_{i+1}^{[5]}\left[\begin{array}{ccccc}
\mathbf{s}_{i+1}^{[5,1]}(1) & \mathbf{s}_{i+1}^{[5,2]}(1)
\end{array}\right]^{T}+\mathbf{V}_{i+1}^{[6]}\mathbf{s}^{[6]}_{i+1}(1)\right)\nonumber\\
&+\mathbf{H}_{i,i+2}\left(\mathbf{V}_{i+2}^{[2]}\mathbf{s}^{[2]}_{i+2}(1)+\mathbf{V}_{i+2}^{[4]}\mathbf{s}^{[4]}_{i+2}(1)+\mathbf{V}_{i+2}^{[5]}\left[\begin{array}{ccccc}
\mathbf{s}_{i+2}^{[5,1]}(1) & \mathbf{s}_{i+2}^{[5,2]}(1)
\end{array}\right]^{T}+\mathbf{V}_{i+2}^{[7]}\mathbf{s}^{[7]}_{i+2}(1)\right)+\mathbf{z}_i(1),
\end{align}
where
\begin{align*}
\mathbf{s}_{i}^{[j]}(1)=\left[\begin{array}{ccccc}
s_{i,\sum_{k=1}^{j}d^{[k-1]}+1}& s_{i,\sum_{k=1}^{j}d^{[k-1]}+2}&
\cdots & s_{i,\sum_{k=1}^{j}d^{k}}
\end{array}\right]^{T},\quad \forall j=1,2,\ldots,7,
\end{align*}
\begin{align*}
&\mathbf{s}_{i}^{[5,1]}(1)=\left[\begin{array}{ccccc}
s_{i,\sum_{k=1}^{5}d^{[k-1]}+1}&
s_{i,\sum_{k=1}^{5}d^{[k-1]}+2}& \cdots &
s_{i,\sum_{k=1}^{5}d^{[k-1]}+d^{[5]}/2}\end{array}\right],\\
&\mathbf{s}_{i}^{[5,2]}(1)=\left[\begin{array}{ccccc}
s_{i,\sum_{k=1}^{5}d^{[k-1]}+d^{[5]}/2+1}&
s_{i,\sum_{k=1}^{5}d^{[k-1]}+d^{[5]}/2+2}& \cdots &
s_{i,\sum_{k=1}^{5}d^{k}}\end{array}\right].
\end{align*}
As we mentioned above, our achievable scheme operates in two time slots.
In the first time slot, transmitter $i$ delivers the
symbols $(\mathbf{s}_i^{[1]}(1),
\mathbf{s}_i^{[2]}(1),\mathbf{s}_i^{[3]}(1),\mathbf{s}_i^{[4]}(1))$,
$(\mathbf{s}_i^{[5,1]}(1), \mathbf{s}_i^{[7]}(1))$, and
$(\mathbf{s}_i^{[5,2]}(1), \mathbf{s}_i^{[6]}(1))$ to receivers
$i$, $i+1$, and $i+2$, respectively, where
\begin{align*}
&\mathbf{s}_{i}^{[5,1]}(1)=\left[\begin{array}{ccccc}
s_{i,\sum_{k=1}^{5}d^{[k-1]}+1}&\cdots &
s_{i,\sum_{k=1}^{5}d^{[k-1]}+d^{[5]}/2}\end{array}\right],\\
&\mathbf{s}_{i}^{[5,2]}(1)=\left[\begin{array}{ccccc}
s_{i,\sum_{k=1}^{5}d^{[k-1]}+d^{[5]}/2+1}&
\cdots &
s_{i,\sum_{k=1}^{5}d^{k}}\end{array}\right].
\end{align*}
Here, although $(\mathbf{s}_i^{[5,1]}(1), \mathbf{s}_i^{[7]}(1))$ and
$(\mathbf{s}_i^{[5,2]}(1), \mathbf{s}_i^{[6]}(1))$ are not desired symbols for receivers $i+1$ and $i+2$, using feedback, transmitters $i+1$ and $i+2$ will forward them to receiver $i$ in the next time slot. To achieve this, we choose $
d^{[j]}$ to satisfy the following conditions.
\begin{align}
&d^{[1]}\leq M-D_c\\
&d^{[2]}\leq M-D_c\\
&d^{[3]}\leq \max(M-2D_c,0)\\
&d^{[5]}\leq \max(2M-4D_d,0,\min(M-2D,0))\\
&d^{[6]}\leq \max(M-D_c-D_d,0)\\
&d^{[7]}\leq \max(M-D_c-D_d,0)\\
&d^{[1]}+d^{[2]}+d^{[3]}+d^{[4]}\leq D_d\\
&d^{[1]}+d^{[4]}+d^{[5]}+d^{[6]}\leq D_c\\
&d^{[2]}+d^{[4]}+d^{[5]}+d^{[7]}\leq D_c\\
&2d^{[1]}+2d^{[2]}+d^{[3]}+2d^{[4]}+\frac{3}{2}d^{[5]}+d^{[6]}+d^{[7]}\leq M\\
&\sum_{j=1}^{7}d^{[j]}\leq M
\end{align}
Here, the conditions $(16)$-$(21)$ are due to the properties of
$\mathbf{V}_{i}^{[j]}$; $(22)$-$(24)$ are due to the fact that the
number of symbols transmitted through a channel is constrained by
the rank of the channel matrix; $(25)$ is due to the fact that the
number of received symbols at a receiver should be less than or
equal to the number of antennas at the receiver; $(26)$ is due to
the fact that the number of transmitted symbols from a transmitter
should be less than or equal to the number of antennas at the
transmitter. Note that, due to the alignment properties of $\mathbf{V}_i^{[4]}$ and $\mathbf{V}_i^{[5]}$, we have
\begin{align*}
&\text{rank}\left(\left[\begin{array}{ccccccccc}\mathbf{H}_{i,i+1}\mathbf{V}_{i+1}^{[1]}&\mathbf{H}_{i,i+1}\mathbf{V}_{i+1}^{[4]}
& \mathbf{H}_{i,i+2}\mathbf{V}_{i+2}^{[4]}
\end{array}\right]\right)=d^{[1]}+d^{[4]},\\
&\text{rank}\left(\left[\begin{array}{ccccccccc}\mathbf{H}_{i,i+1}\mathbf{V}_{i+1}^{[5]}&\mathbf{H}_{i,i+2}\mathbf{V}_{i+2}^{[5]}
\end{array}\right]\right)=\frac{3}{2}d^{[5]}.
\end{align*}

Then, we have
\begin{align}
&\text{rank}\left(\mathbf{A}_1\right)=\sum_{j=1}^{4}d^{[j]},\\
&\text{rank}\left(\mathbf{A}_2\right)=d^{[1]}+d^{[4]}+d^{[5]}+d^{[6]},\\
&\text{rank}\left(\mathbf{A}_3\right)=d^{[2]}+d^{[4]}+d^{[5]}+d^{[7]},\\
&\text{rank}\left(\left[\begin{array}{ccccccccc}\mathbf{A}_2 &
\mathbf{A}_3
\end{array}\right]\right)=d^{[1]}+d^{[2]}+d^{[4]}+\frac{3}{2}d^{[5]}+d^{[6]}+d^{[7]},\\
&\text{rank}\left(\left[\begin{array}{ccccccccc}\mathbf{A}_1 &
\mathbf{A}_2 & \mathbf{A}_3
\end{array}\right]\right)=2d^{[1]}+2d^{[2]}+d^{[3]}+2d^{[4]}+\frac{3}{2}d^{[5]}+d^{[6]}+d^{[7]}
\end{align}
with probability one, where
\begin{align*}
\mathbf{A}_1&=\left[\begin{array}{ccccccccc}\mathbf{H}_{i,i}\mathbf{V}_{i}^{[1]}&\mathbf{H}_{i,i}\mathbf{V}_{i}^{[2]}
& \mathbf{H}_{i,i}\mathbf{V}_{i}^{[3]} &
\mathbf{H}_{i,i}\mathbf{V}_{i}^{[4]}
\end{array}\right],\\
\mathbf{A}_2&=\left[\begin{array}{ccccccccc}\mathbf{H}_{i,i+1}\mathbf{V}_{i+1}^{[1]}&\mathbf{H}_{i,i+1}\mathbf{V}_{i+1}^{[4]}
& \mathbf{H}_{i,i+1}\mathbf{V}_{i+1}^{[5]} &
\mathbf{H}_{i,i+1}\mathbf{V}_{i+1}^{[6]}
\end{array}\right],\\
\mathbf{A}_3&=\left[\begin{array}{ccccccccc}\mathbf{H}_{i,i+2}\mathbf{V}_{i+2}^{[2]}&\mathbf{H}_{i,i+2}\mathbf{V}_{i+2}^{[4]}
& \mathbf{H}_{i,i+2}\mathbf{V}_{i+2}^{[5]} &
\mathbf{H}_{i,i+2}\mathbf{V}_{i+2}^{[7]}
\end{array}\right].
\end{align*}
Notice that (27)-(31) are due to the facts that $\mathbf{V}_{1}$,
$\mathbf{V}_{2}$, and $\mathbf{V}_{3}$ are full-rank matrices and
all the channel matrices are generic. Thus, by observing
$\mathbf{y}_i(1)$, receiver $i$ and transmitter $i$ can decode the
desired symbols $(\mathbf{s}_i^{[1]}(1),
\mathbf{s}_i^{[2]}(1),\mathbf{s}_i^{[3]}(1),\mathbf{s}_i^{[4]}(1))$
and the other user's symbols $(\mathbf{s}_{i+1}^{[5,2]}(1),
\mathbf{s}_{i+1}^{[6]}(1), \mathbf{s}_{i+2}^{[5,1]}(1),
\mathbf{s}_{i+2}^{[7]}(1) )$ as desired.

Now we consider the proposed scheme in the second time slot.
Recall that transmitter $i$ can know the other user's symbols
$(\mathbf{s}_{i+1}^{[5,2]}(1), \mathbf{s}_{i+1}^{[6]}(1),
\mathbf{s}_{i+2}^{[5,1]}(1), \mathbf{s}_{i+2}^{[7]}(1))$ after
receiving $\mathbf{y}_i(1)$. In the second time slot, each
transmitter will forward these symbols to the corresponding
receivers, i.e.,  forward $(\mathbf{s}_{i+1}^{[5,2]}(1),
\mathbf{s}_{i+1}^{[6]}(1))$ to receiver $i+1$ and
$(\mathbf{s}_{i+2}^{[5,1]}(1), \mathbf{s}_{i+2}^{[7]}(1))$ to
receiver $i+2$, and also send its own new symbols. To achieve
this, we set the symbols of user $i$ transmitted at time slot 2 as
\begin{align*}
&\mathbf{s}_{i}^{[j]}(2)=\left[\begin{array}{ccccc}
s_{i,\sum_{k=1}^{j}d^{[k-1]}+\sum_{l=1}^{7}d^{[l]}+1}&
\cdots & s_{i,\sum_{k=1}^{j}d^{k}+\sum_{l=1}^{7}d^{[l]}}
\end{array}\right]^{T},\quad \forall j=1, \ldots, 4,\\
&\mathbf{s}_{i}^{[5]}(2)=\left[\begin{array}{ccccc}
\mathbf{s}_{i+1}^{[5,2]}(1) & \mathbf{s}_{i+2}^{[5,1]}(1)
\end{array}\right]^{T},\\
&\mathbf{s}_{i}^{[6]}(2)=\mathbf{s}_{i+2}^{[7]}(1),\\
&\mathbf{s}_{i}^{[7]}(2)=\mathbf{s}_{i+1}^{[6]}(1).
\end{align*}
Here, $\mathbf{s}_i^{[1]}$(2), $\mathbf{s}_i^{[2]}$(2), $\mathbf{s}_i^{[3]}$(2), and $\mathbf{s}_i^{[4]}$(2) are new symbols of user $i$ transmitted at the second time slot.
As a result, the received signal of receiver $i$ at time slot 2 is given by
\begin{align*}
\mathbf{y}_i(2)&=\mathbf{H}_{i,i}\left(\mathbf{V}_{i}^{[1]}\mathbf{s}^{[1]}_{i}(2)+\mathbf{V}_{i}^{[2]}\mathbf{s}^{[2]}_{i}(2)+\mathbf{V}_{i}^{[3]}\mathbf{s}^{[3]}_{i}(2)+\mathbf{V}_{i}^{[4]}\mathbf{s}^{[4]}_{i}(2)\right)\nonumber \\
&+\mathbf{H}_{i,i+1}\left(\mathbf{V}_{i+1}^{[1]}\mathbf{s}^{[1]}_{i+1}(2)+\mathbf{V}_{i+1}^{[4]}\mathbf{s}^{[4]}_{i+1}(2)+\mathbf{V}_{i+1}^{[5]}\left[\begin{array}{ccccc}
\mathbf{s}_{i+2}^{[5,2]}(1) & \mathbf{s}_{i}^{[5,1]}(1)
\end{array}\right]^{T}+\mathbf{V}_{i+1}^{[6]}\mathbf{s}^{[7]}_{i+1}(1)\right)\nonumber\\
&+\mathbf{H}_{i,i+2}\left(\mathbf{V}_{i+2}^{[2]}\mathbf{s}^{[2]}_{i+2}(2)+\mathbf{V}_{i+2}^{[4]}\mathbf{s}^{[4]}_{i+2}(2)+\mathbf{V}_{i+2}^{[5]}\left[\begin{array}{ccccc}
\mathbf{s}_{i}^{[5,2]}(1) & \mathbf{s}_{i+1}^{[5,1]}(1)
\end{array}\right]^{T}+\mathbf{V}_{i+2}^{[7]}\mathbf{s}^{[6]}_{i+2}(2)\right)+\mathbf{z}_i(2).
\end{align*}
Then, using the same argument as above, receiver $i$ can decode
all the symbols ($\mathbf{s}^{[1]}_{i}(2)$,
$\mathbf{s}^{[2]}_{i}(2)$,  $\mathbf{s}^{[3]}_{i}(2)$,
$\mathbf{s}^{[4]}_{i}(2)$,  $\mathbf{s}^{[5]}_{i}(1)$,
$\mathbf{s}^{[6]}_{i}(1)$, $\mathbf{s}^{[7]}_{i}(1)$).

In summary, during two time slots, receiver $i\in\{1,2,3\}$ can
decode
$2d^{[1]}+2d^{[2]}+2d^{[3]}+2d^{[4]}+d^{[5]}+d^{[6]}+d^{[7]}$
desired symbols, thus achieving total DoF:
\begin{align*}
\Gamma_{fb}\geq
d^{[1]}+d^{[2]}+d^{[3]}+d^{[4]}+\frac{d^{[5]}+d^{[6]}+d^{[7]}}{2}.
\end{align*}


Now we analyze the achievable total DoF by determining suitable
$d^{[j]}$ for all $j=1,2,\ldots,7$ with respect to $M$, $D_d$ and
$D_c$.

\subsection{Case 1 : when $D_c\leq M\leq 2D_c$}
\subsubsection{$M\geq 2D_d$ and $M\leq D_d+D_c$}
In this case, the proposed scheme employs interference alignment.
We set\footnote{If $\frac{2M-4D_d}{3}$ is not an integer, we
consider the three-time symbol extension as
in~\cite{Cadambe107},\cite{Krishnamurthy12}. Furthermore, whenever
$d^{[j]}$ is not an integer, we can consider a proper symbol
extension.}
\begin{align*}
&d^{[1]}=M-D_c,  \\
&d^{[4]}=D_d+D_c-M, \\
&d^{[2]}=d^{[3]}=d^{[6]}=d^{[7]}=0,\\
&d^{[5]}=\frac{2M-4D_d}{3},
\end{align*}
which satisfies the conditions (16)-(26). Then, during two time
slots, receiver $i\in\{1,2,3\}$ can decode
$2d^{[1]}+2d^{[4]}+d^{[5]}=2D_d+\frac{2M-4D_d}{3}$ symbols, thus
achieving the following total DoF:
\begin{align}
\Gamma_{fb}&\geq 3\left(D_d+\frac{M-2D_d}{3}\right)\nonumber \\
&=M+D_d.
\end{align}

\subsubsection{When $M\geq 2D_d$ and $M\geq D_d+D_c$} As in the previous case, the proposed scheme involves interference alignment. We set
\begin{align*}
&d^{[1]}=d^{[2]}=\frac{D_d}{2}, \\
&d^{[3]}=d^{[4]}=0,\\
&d^{[5]}=\frac{4D_c-2M}{3},\\
&d^{[6]}=d^{[7]}=M-D_c-D_d,
\end{align*}
which satisfies the conditions (16)-(26). Then, during two time
slots, receiver $i\in\{1,2,3\}$ can decode
$2d^{[1]}+2d^{[2]}+d^{[5]}+d^{[6]}+d^{[7]}$ symbols, thus
achieving the total DoF:
\begin{align}
\Gamma_{fb}&\geq 3\left(M-D_c+\frac{2D_c-M}{3}\right)\nonumber \\
&=2M-D_c.
\end{align}

\subsubsection{When $M\leq 2D_d$} In this case, we use the non-feedback
scheme in~\cite{Krishnamurthy12} and can achieve
\begin{align}
\Gamma_{fb}\geq \frac{3M}{2}
\end{align}
by setting
\begin{align*}
&d^{[1]}=M-D_c, \\
&d^{[4]}=D_c-\frac{M}{2}, \\
&d^{[2]}=d^{[3]}=d^{[5]}=d^{[6]}=d^{[7]}=0.
\end{align*}

Combining (32), (33), and (34), we obtain the following lower
bound on the total DoF.
\begin{align}
\Gamma_{fb}\geq
\max\left\{\min\left\{\frac{3M}{2},M+D_d\right\},2M-D_c\right\},\quad
\textrm{if $D_c\leq M\leq 2D_c$}.
\end{align}

\subsection{Case 2 : when $2D_c\leq M\leq 2D_c+D_d$}
\subsubsection{When $M\geq D_c+D_d$}
In this case, the proposed scheme is merely based on zero forcing
$(d^{[4]}=d^{[5]}=0)$. We set
\begin{align*}
&d^{[1]}=d^{[2]}=\frac{2D_c+D_d-M}{2}, \\
&d^{[3]}=M-2D_c,\\
&d^{[4]}=d^{[5]}=0,\\
&d^{[6]}=d^{[7]}=M-D_c-D_d
\end{align*}
which satisfies the conditions (16)-(26). Then, the achievable
total DoF is given by
\begin{align}
\Gamma_{fb}&\geq 3\left(D_d+M-D_c-D_d\right)\nonumber \\
&=3M-3D_c.
\end{align}
\subsubsection{When $M\leq D_d+D_c$}
 In this case, we use the non-feedback
scheme in~\cite{Krishnamurthy12} and can achieve
\begin{align}
\Gamma_{fb}\geq 3M-3D_c
\end{align}
by setting
\begin{align*}
&d^{[1]}=d^{[2]}=\frac{D_c}{2}, \\
&d^{[3]}=M-2D_c,\\
&d^{[4]}=d^{[5]}=d^{[6]}=d^{[7]}=0.
\end{align*}

Combining (36) and (37), we obtain the following lower bound on
the total DoF.
\begin{align}
\Gamma_{fb}\geq 3M-3D_c, \quad \textrm{if $2D_c\leq M\leq
2D_c+D_d$}.
\end{align}

\subsection{Case 3 : when $M\geq 2D_c+D_d$}
In this case, we use only $2D_c+D_d$ antennas out of $M$ antennas
at each node. Then, from the result in Case 2, we can achieve
\begin{align}
\Gamma_{fb}\geq 3D_d+3D_c, \quad \textrm{if $M\geq 2D_c+D_d$},
\end{align}
by setting
\begin{align*}
&d^{[1]}=d^{[2]}=d^{[4]}=d^{[5]}=0 \\
&d^{[3]}=D_d\\
&d^{[6]}=d^{[7]}=D_c.
\end{align*}

Finally, by combining (35), (38), and (39), we get the desired
bound:
\begin{align*}
\Gamma_{fb}\geq \left\{\begin{array}{ll} \max\left\{\min\left\{\frac{3M}{2},M+D_d\right\},2M-D_c\right\}& \textrm{if $D_c\leq M\leq 2D_c$}, \\
 3M-3D_c& \textrm{if $2D_c\leq M\leq 2D_c+D_d$},\\
 3D_d+3D_c& \textrm{if $2D_c+D_d\leq M$}\\
\end{array} \right.
\end{align*}

\section{Proof of Theorem 3}
Let $\bar{W}_{i}\triangleq\{W_{i+1},W_{i+2},\ldots,W_K\}$,
$\bar{X}_{i}\triangleq\{\mathbf{x}^n_{i+1},\mathbf{x}^n_{i+2},\ldots,
\mathbf{x}^n_{K}\}$, and
$\bar{Y}_{i}\triangleq\{\mathbf{y}^n_{i+1},\mathbf{y}^n_{i+2},\ldots,
\mathbf{y}^n_{K}\}$ $\forall i=1,2,\ldots,K$, where
$\bar{W}_{K}=\bar{X}_{K}=\bar{Y}_{K}=\emptyset$. Starting from
Fano's inequality, we have
\begin{align*}
n\left(\sum_{i=1}^{K}R_i-\epsilon_n\right)&\leq
\sum_{i=1}^{K}I\left(W_i;\mathbf{y}_i^n\right)\\
&\stackrel{(a)}{\leq}  \sum_{i=1}^{K}I\left(W_i;\mathbf{y}_i^n,\bar{W}_{i},\bar{Y}_{i}\right)\\
&\stackrel{(b)}{=} \sum_{i=1}^{K} I\left(W_i;\mathbf{y}_i^n,\bar{Y}_{i}|\bar{W}_{i}\right)\\
&=\sum_{i=1}^{K}
h\left(\mathbf{y}_i^n,\bar{Y}_{i}|\bar{W}_{i}\right)-h\left(\mathbf{y}_i^n,\bar{Y}_{i}|\bar{W}_{i},W_i\right)\\
&=\sum_{i=1}^{K}
h\left(\mathbf{y}_i^n|\bar{Y}_{i},\bar{W}_{i}\right)+\sum_{i=1}^{K}h\left(\bar{Y}_{i}|\bar{W}_{i}\right)-h\left(\mathbf{y}_i^n,\bar{Y}_{i}|\bar{W}_{i},W_i\right)\\
&=\sum_{i=1}^{K}
h\left(\mathbf{y}_i^n|\bar{Y}_{i},\bar{W}_{i}\right)-h\left(\mathbf{y}^n_1,\mathbf{y}^n_2,\ldots,\mathbf{y}^n_K|W_1,W_2,\ldots,W_K\right)\\
&+\sum_{i=2}^{K}h\left(\bar{Y}_{i}|\bar{W}_{i}\right)-h\left(\mathbf{y}_i^n,\bar{Y}_{i}|\bar{W}_{i},W_i\right)+h\left(\mathbf{y}^n_2,\ldots,\mathbf{y}^n_K|W_2,\ldots,W_K\right)\\
&\stackrel{(c)}{=} \sum_{i=1}^{K} h\left(\mathbf{y}_i^n|\bar{Y}_{i},\bar{W}_{i}\right)-h\left(\mathbf{y}^n_1,\mathbf{y}^n_2,\ldots,\mathbf{y}^n_K|W_1,W_2,\ldots,W_K\right)\\
&\stackrel{(d)}{=} \sum_{i=1}^{K} h\left(\mathbf{y}_{i}^n\bigg\vert\bar{Y}_{i},\bar{W}_{i},\bar{X}_{i}\right)-\sum_{i=1}^{K}\sum_{t=1}^{n}h\left(\mathbf{z}_i(t)\right)\\
&= \sum_{i=1}^{K} h\left(\sum_{j=1}^{i}\mathbf{H}_{i,j}\mathbf{x}_j^n+\mathbf{z}^n_i\bigg\vert\bar{Y}_{i},\bar{W}_{i},\bar{X}_{i}\right)-\sum_{i=1}^{K}\sum_{t=1}^{n}h\left(\mathbf{z}_i(t)\right)\\
&\stackrel{(e)}{\leq} \sum_{i=1}^{K}
\sum_{t=1}^{n}h\left(\sum_{j=1}^{i}\mathbf{H}_{i,j}\mathbf{x}_j(t)+\mathbf{z}_i(t)\right)-h\left(\mathbf{z}_i(t)\right)
\end{align*}
where $(a)$ follows from the fact that adding information
increases mutual information (providing a genie); $(b)$ follows
from the independence of $(W_1,W_2,\ldots,W_K)$; $(c)$ follows
from the recursive properties of $\bar{W}_i$ and $\bar{Y}_i$;
$(d)$ follows from the fact that $\mathbf{x}_i(t)$ is a function
of $(W_i,\mathbf{y}_{i}^{t-1})$ and $\mathbf{x}_i^n$ is a function
of $(W_i,\mathbf{y}_{i}^{n})$; and $(e)$ follows from the fact
that conditioning reduces entropy.

Therefore, we have
\begin{align*}
\sum_{i=1}^{K}R_i&\leq
\sum^{K}_{i=1}h\left(\sum^{i}_{j=1}\mathbf{H}_{i,j}\mathbf{x}_j+\mathbf{z}_i\right)-h\left(\mathbf{z}_i\right)\\
&\stackrel{(a)}{\leq}\sum^{K}_{i=1}\left(\left(D_d+D_c(i-1)\right)\log
P+o(\log
P)\right)-\sum_{i=1}^{K}h\left(\mathbf{z}_i\right)\\
&=\left(D_dK+\frac{K(K-1)D_c}{2}\right)\log P+o(\log P)
\end{align*}
where $(a)$ follows from the fact that the pre-log term of
$h(\sum^{i}_{j=1}\mathbf{H}_{i,j}\mathbf{x}_j+\mathbf{z}_i)$ is
constrained by $D_d+D_c(i-1)$. Hence, we get the following upper
bound:
\begin{align*}
\Gamma_{fb}\leq \left(D_dK+\frac{K(K-1)D_c}{2}\right).
\end{align*}

\section{Conclusion}
In this paper, we have investigated the total DoF of the $K$-user
rank-deficient interference channel with feedback. When $K=2$, we have developed an explicit achievable scheme and obtaining a matching upper bound, thus characterizing the total DoF.
When $K=3$, we have proposed a new achievable scheme which involves interference alignment, especially when the rank of cross links is sufficiently large as compared to the number of antennas at each node. In addition, we have derived an upper bound
for the general $K$-user case.

We have showed that in contrast to the full-rank case, feedback can
indeed increase the DoF by providing alternative signal paths. Furthermore, if we can use sufficiently many antennas at each node, this DoF gain increases proportionally with the number of users.
 Therefore, using feedback can be an attractive solution to overcome the rank-deficiency of channel matrices in a poor scattering environment.

Our work can be extended to several interesting directions: (1)
Developing an achievable scheme for the general $K$-user case; (2) Extending to other
feedback models (e.g., limited feedback); (3) Extending to the
cases in which there is no or delayed channel state information at
transmitters.

\bibliographystyle{IEEEtran}
\bibliography{IEEEabrv,References}

\end{document}